\newcommand{\lex}{{\slshape lex}}
\newcommand{\eg}{{e.\,g.}}
\DeclareMathOperator{\rank}{rank}
\begin{document}
\title{Paths to stable allocations}
\author{\'{A}gnes Cseh \and Martin Skutella}

\institute{TU Berlin, Institut f\"ur Mathematik, Stra{\ss}e des 17.~Juni 136, 10623 Berlin, Germany}
\maketitle

\begin{abstract}
The stable allocation problem is one of the broadest extensions of the well-known stable marriage problem. In an allocation problem, edges of a bipartite graph have capacities and vertices have quotas to fill. Here we investigate the case of uncoordinated processes in stable allocation instances. In this setting, a feasible allocation is given and the aim is to reach a stable allocation by raising the value of the allocation along blocking edges and reducing it on worse edges if needed. Do such myopic changes lead to a stable solution?

In our present work, we analyze both better and best response dynamics from an algorithmic point of view. With the help of two deterministic algorithms we show that random procedures reach a stable solution with probability one for all rational input data in both cases. Surprisingly, while there is a polynomial path to stability when better response strategies are played (even for irrational input data), the more intuitive best response steps may require exponential time. We also study the special case of correlated markets. There, random best response strategies lead to a stable allocation in expected polynomial time.

\smallskip
\noindent \textbf{Keywords.} stable matching, stable allocation, paths to stability, best response strategy, better response strategy, correlated market
\end{abstract}

\section{Introduction}

Matching markets without prices model various real-life problems such as, \eg, employee placement, task scheduling or kidney donor matching. Research on those markets focuses on maximizing social welfare instead of profit. Stability is probably the most widely used optimality criterion in that case. 

Finding equilibria in markets that lack a central authority of control is another widely studied, challenging task. Besides modeling uncoordinated markets, like third-generation (3G) wireless data networks~\cite{GLM06}, selfish and uncontrolled agents can also represent modifications in coordinated markets, \eg, the arrival of a new participant or slightly changed preferences~\cite{BRR97}. In our present work, those two topics are combined: we study uncoordinated capacitated matching~markets.

\subsection{Stability in matching markets}

The theory of stable matchings has been investigated for decades. Gale and~Shapley~\cite{GS:1962} introduced the notion of stability on their well-known \emph{stable marriage problem}. An instance of this problem consists of a bipartite graph where color classes symbolize men and women, respectively. Each participant has a preference list of their acquaintances of the opposite gender. A set of marriages (a matching) is \emph{stable}, if no pair blocks it. A \emph{blocking pair} is an unmarried pair so that the man is single or he prefers the woman to his current wife and vice versa, the woman is single or she prefers the man to her current husband. The Gale-Shapley algorithm was the first proof for the existence of stable matchings.

A natural extension of matching problems arises when capacities are introduced. The stable allocation problem is defined in a bipartite graph with edge capacities and quotas on vertices. The exact problem formulation and a detailed example are provided~in Section~\ref{sec_pre}.

\subsection{Better and best response steps in uncoordinated markets}
\label{subsec:uncoord}

Central planning is needed in order to produce a stable solution with the Gale-Shapley algorithm. In many real-life situations, however, such a coordination is not available. Agents play their selfish strategy, trying to reach the best possible solution. A \emph{path to stability} is a series of myopic operations. The intuitive picture of a myopic operation is the following. If a man and a woman block a marriage scheme, then they both agree to form a couple together, even if they divorce their current partners to that end. This step may induce new blocking pairs. Such changes are made until a stable matching is reached. 

Note that stability is naturally a desirable property of uncoordinated markets. A stable matching seems to be the best reachable solution for all participants, because they cannot find any partnership that could improve their own position.

The study of uncoordinated matching processes has a long history. In the~case of one-to-one matchings, two different concepts have been studied: better and~best response dynamics. One of the color classes is chosen to be the \emph{active} side. These vertices submit proposals to the \emph{passive} vertices. According to \emph{best response dynamics}, the best blocking edge of an active vertex is chosen to perform~myopic changes along. In \emph{better response dynamics}, any blocking edge can play this~role. 

The core questions regarding uncoordinated processes rise naturally. Can a series of myopic changes result in returning back to the same unstable matching? If yes, is there a way to reach a stable solution? How do random procedures behave? The first question about uncoordinated two-sided matching markets was brought up by Knuth~\cite{Knuth:1976:MSR} in~1976. He also gives an example of a matching problem where better response dynamics cycle. More than a decade later, Roth and Vande Vate~\cite{roth_vandevate} came up with the next result on the topic. They show that random better response dynamics converge to a stable matching with probability one. Analogous results for best response dynamics were published in 2011 by Ackermann et al.~\cite{ackermann2011uncoordinated}. They also show an instance in which best response dynamics cycle give a deterministic algorithm for reaching a stable solution in polynomial time and prove that the convergence time is exponential in both random cases.

\subsubsection{Example for a best response cycle on a matching instance~\cite{ackermann2011uncoordinated}.}

Starting with the unstable matching $(j_2m_2, j_3m_3)$, and saturating the blocking edges $j_1m_3$, $j_2m_1$, $j_3m_1$, $j_1m_2$, $j_2m_2$, $j_3m_3$ in this order leads back to the same unstable matching. In each round, the chosen blocking edge is the best blocking edge of its job.
\begin{figure}[H]
\begin{center}
\begin{tikzpicture}[scale=1, transform shape]

\tikzstyle{vertex} = [circle, draw=black]
\tikzstyle{edgelabel} = [circle, fill=white]

\node[vertex] (j_3) at (1, 4) {$j_3$};
\node[vertex] (j_2) at (-4, 4) {$j_2$};
\node[vertex] (j_1) at (-9, 4) {$j_1$};

\node[vertex] (m_1) at (-9, 0) {$m_1$};
\node[vertex] (m_2) at (-4, 0) {$m_2$};
\node[vertex] (m_3) at (1, 0) {$m_3$};

\draw [] (j_1) -- node[edgelabel, very near start] {3} node[edgelabel, very near end] {1} (m_1);
\draw [] (j_1) -- node[edgelabel, very near start] {1} node[edgelabel, very near end] {2} (m_2);
\draw [] (j_1) -- node[edgelabel, very near start] {2} node[edgelabel, very near end] {1} (m_3);

\draw [] (j_2) -- node[edgelabel, very near start] {1} node[edgelabel, very near end] {3} (m_1);
\draw [ultra thick] (j_2) -- node[edgelabel, very near start] {2} node[edgelabel, very near end] {1} (m_2);
\draw [] (j_2) -- node[edgelabel, very near start] {3} node[edgelabel, very near end] {2} (m_3);

\draw [] (j_3) -- node[edgelabel, very near start] {2} node[edgelabel, very near end] {2} (m_1);
\draw [] (j_3) -- node[edgelabel, very near start] {3} node[edgelabel, very near end] {3} (m_2);
\draw [ultra thick] (j_3) -- node[edgelabel, very near start] {1} node[edgelabel, very near end] {3} (m_3);

\end{tikzpicture}
\end{center}
\caption{A cycle of best response blocking edges}
\label{fig:cycle}
\end{figure}
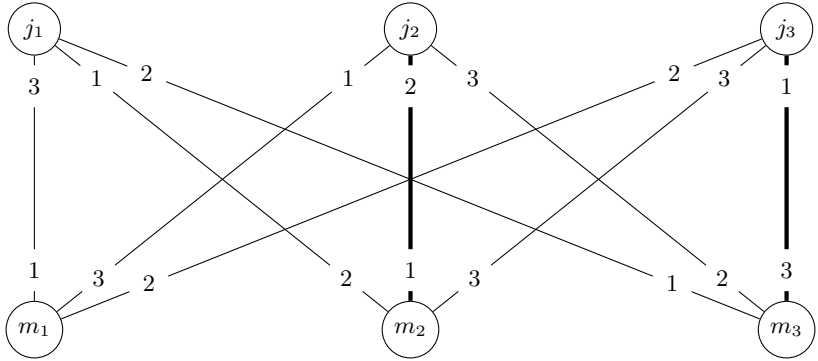

Besides these works on the classical stable marriage problem, there is a number of papers investigating variants of it from the paths-to-stability point of view. For the stable roommates problem, the non-bipartite version of the stable marriage problem, it is known that there is a series of myopic operations that leads to a stable solution, if one exists~\cite{Diamantoudi200418}. A path to stability also exists in the bipartite matching case with payments where flexible salaries and productivity are taken into account~\cite{chen_decentr}. In the hospitals/residents assignment problem, when couples are present, the existence of such a path is only guaranteed if the preferences are weakly responsive~\cite{klaus_pts}. Weak responsiveness ensures consistence between the preferences of each partner and the couple's preference list on pairs of hospitals. In many-to-many markets, supposing substitutable preferences on one side and responsive preferences on the other side, a path to stability can be found~\cite{unver_kojima}. Both substitutable and responsive preferences are defined in instances where preferences are given on sets of vertices. Although many variants of the stable marriage problem have been studied, no paper discusses the case of allocations (instead of matchings or $b$-matchings), where edges are capacitated, thus, they can be partially in stable solutions. Our present work makes an attempt to fill this gap in the literature.

\paragraph{Structure of the paper} In the next section, the essential theoretical basis is provided: stable allocations, and better and best response modifications on such instances are defined. In Section~\ref{sec:corr}, a special case of allocation instances are investigated. We show that although random best response processes generally run in exponential time, in the case of correlated markets, polynomial convergence is expected. Better and best response dynamics in the general case on rational input are extensively studied in Section~\ref{sec:rat}. We describe two deterministic algorithms that generalize the result of Ackermann et al.\ on one-to-one matching markets to stable allocation instances and also show algorithmic differences between the two strategies. In the case of random procedures, convergence is shown for both strategies. Section~\ref{sec:irrat} focuses on running time efficiency. There, a better response algorithm is presented that terminates with a stable solution in $O(|V|^2|E|)$ time, even for irrational input data. A counterexample proves that such an acceleration for the best response dynamics cannot be reached. 

\begin{table}[h]
	\centering
		\begin{tabular}{|c|c c|}
		\hline
			\phantom{n}& shortest path to stability & random path to stability\\ \hline
			best response dynamics & exponential length & converges with probability 1\\
			better response dynamics & polynomial length & converges with probability 1\\
		\hline
		\end{tabular}
\caption{Our results for rational input}
\end{table}

Applied to a matching instance, our best-response algorithm performs the same steps as the two-phase best response algorithm of Ackermann et al. Our better-response variant can also be interpreted as an extended version of the above mentioned method. The only difference is that while our first phase is better response, theirs is best response. However, this seems to be a minor difference, as their proof is also valid for a better response first phase, and our proof still holds if only best blocking edges are chosen. Moreover, stable allocations might be the most complex model in which this approach brings results. The most intuitive extension of Ackermann's algorithm for stable flows~\cite{egres-09-11} does not even result in feasible myopic changes. 

On the other hand, our accelerated better-response algorithm generalizes another known method. Applied directly to the instance with the empty allocation, the accelerated Phase~II performs augmentations like the augmenting path algorithms of Ba\"iou and Balinski, and of Dean and Munshi. Since our algorithm is an accelerated version of our first algorithm, our concept offers a bridge between two known methods for solving two completely different problems, providing a solution to both of them.

\section{Preliminaries}
\label{sec_pre}

\subsection{Stable allocations}

The marriage problem has been extended in several directions. A great deal of research effort has been spent on \emph{many-to-one} and \emph{many-to-many matchings}, sometimes also referred to as $b$-matchings. Their extension is called the \emph{stable allocation problem}, also known as the ordinal transportation problem, since it is a
direct analog of the classical cost-based transportation problem. In this problem, the vertices of a bipartite graph $G=(V,E)$ have \emph{quotas} $q: V \rightarrow \mathbb{R}_{\geq 0}$, while edges have \emph{capacities} $c: E \rightarrow \mathbb{R}_{\geq 0}$. Both functions are \emph{real-valued}, unlike the respective functions in many-to-many instances, where capacities are unit, while quotas are integer-valued. Therefore, allocations can model more complex problems, for example where goods can be divided unequally between agents. In order to avoid confusion caused by terms associated with the marriage model, we call the vertices of the first color class \emph{jobs} and the remaining vertices \emph{machines}. For each machine, its quota is the maximal time spent working. A job's quota is the total time that machines must spend on the job in order to complete it. In addition, machines have a limit on the time spent on a specific job; this is modeled by edge capacities. A feasible allocation is a set of contracts where no machine is overwhelmed and no job is worked on after it has been completed.

\begin{definition} [allocation]
	Function $x: E \rightarrow \mathbb{R}_{\geq 0}$ is called an \emph{allocation} if both of the following hold for every edge $e \in E$ and every vertex $v \in V$ of $G$:
	\begin{enumerate}
		\item $x(e) \leq c(e)$;
		\item $x(v) := \sum_{e \in \delta(v)} x(e) \leq q(v),$ \text{ where $\delta(v)$ is the set of edges incident to~$v$.}
	\end{enumerate}
\end{definition}

To define stability we need \emph{preference lists} as well. All vertices rank their incident edges strictly. Vertex $v$ prefers $uv$ to~$wv$, if $uv$ has a lower rank on $v$'s preference list than $wv$: $\rank_v(uv) < \rank_v(wv)$. In this case we say that $uv$ \emph{dominates} $wv$ at~$v$. A stable allocation instance consists of four elements: $\mathcal{I} = (G, q, c, O)$, where $O$ is the set of all preference lists.

\begin{definition}[blocking edge, stable allocation]
\label{def_st_all}
	An allocation $x$ is \emph{blocked} by an edge $jm$ if all of the following properties hold:
	\begin{enumerate}
		\item $x(jm) < c(jm)$;
		\item\label{j_dom}  $x(j) < q(j)$ or $j$ prefers $jm$ to its worst edge with positive value in $x$;
		\item $x(m) < q(m)$ or $m$ prefers $jm$ to its worst edge with positive value in $x$.
	\end{enumerate}
	A feasible allocation is \emph{stable} if no edge blocks it.
\end{definition}

In other words, edge $jm$ is blocking if it is unsaturated and neither end vertices of $jm$ could fill up its quota with at least as good edges as~$jm$. If an unsaturated edge fulfills the second criterion, then we say that it \emph{dominates} $x$ at~$j$. Similarly, if the third criterion is fulfilled, then we talk about an edge dominating $x$ at~$m$.

\subsubsection{Example for a stable allocation instance.}
The figure below illustrates a stable allocation instance. We use the same example throughout the entire paper to demonstrate different notions defined here. For the sake of simplicity, all edge capacities are unit. The numbers over and under the vertices represent the quota function. The preferences can be seen on the edges: the more preferred edges carry a higher rank, a smaller number. For example, machine $m_1$'s most preferred job is $j_2$, its second choice is $j_3$, while its least preferred, but still acceptable job is~$j_1$. The function $x = 1$ on the thick edges and $x = 0$ on the remaining edges is a feasible allocation, since no quota or capacity constraint is harmed. The unique blocking edge is easy to find: $j_3m_1$ blocks~$x$, because it is unsaturated and both end vertices have some free quota.

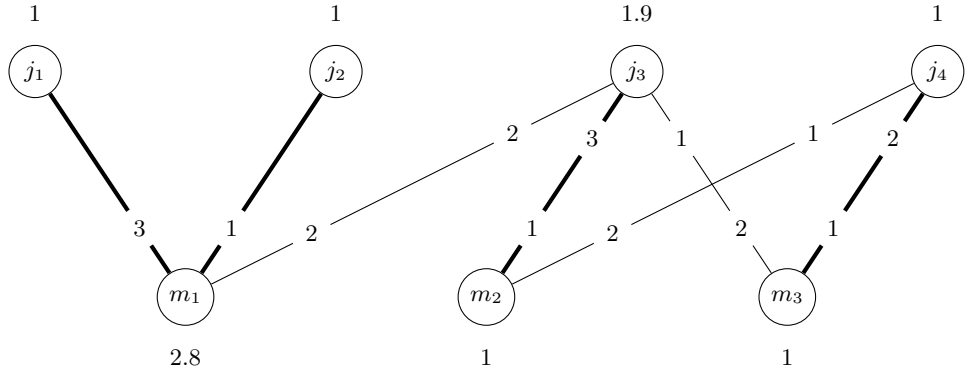
\begin{figure}[H]
\begin{center}
\begin{tikzpicture}[scale=1, transform shape]

\tikzstyle{vertex} = [circle, draw=black]
\tikzstyle{edgelabel} = [circle, fill=white]

\node[vertex] (j_3) at (0, 3) {$j_3$};
\node[above=0.2 cm of j_3] {$1.9$};
\node[vertex] (j_4) at (4, 3) {$j_4$};
\node[above=0.2 cm of j_4] {$1$};
\node[vertex] (j_2) at (-4, 3) {$j_2$};
\node[above=0.2 cm of j_2] {$1$};
\node[vertex] (j_1) at (-8, 3) {$j_1$};
\node[above=0.2 cm of j_1] {$1$};

\node[vertex] (m_1) at (-6,0) {$m_1$};
\node[below=0.2 cm of m_1] {$2.8$};
\node[vertex] (m_2) at (-2, 0) {$m_2$};
\node[below=0.2 cm of m_2] {$1$};
\node[vertex] (m_3) at (2,0) {$m_3$};
\node[below=0.2 cm of m_3] {$1$};

\draw [] (j_3) -- node[edgelabel, near start] {2} node[edgelabel, near end] {2} (m_1);
\draw [] (j_3) -- node[edgelabel, near start] {1} node[edgelabel, near end] {2} (m_3);
\draw [ultra thick] (j_3) -- node[edgelabel, near start] {3} node[edgelabel, near end] {1} (m_2);
\draw [] (j_4) -- node[edgelabel, near start] {1} node[edgelabel, near end] {2} (m_2);
\draw [ultra thick] (j_4) -- node[edgelabel, near start] {2} node[edgelabel, near end] {1} (m_3);

\draw [ultra thick] (j_2) -- node[edgelabel, near end] {1} (m_1);
\draw [ultra thick] (j_1) -- node[edgelabel, near end] {3} (m_1);
	
\end{tikzpicture}
\end{center}
\caption{Feasible, but unstable allocation}
\label{fig:ex}
\end{figure}

Ba\"iou and Balinski~\cite{DBLP:journals/mor/BaiouB02} prove that stable allocations always exist. They also give two algorithms for finding them, an extended version of the Gale-Shapley algorithm and an inductive algorithm. The worst case running time of the first algorithm is exponential, but the latter one runs in strongly polynomial time. Dean and Munshi~\cite{DBLP:journals/algorithmica/DeanM10} speed up the polynomial algorithm using sophisticated data structures: their version runs in $O(|E| \log |V|)$ time for any real-valued instance.

\subsection{Better and best response steps for allocations}

First, we provide some basic definitions and notations we will use throughout the entire paper. A feasible, but possibly unstable allocation $x$ is given at the beginning, the instance can be written as $\mathcal{I}$$ = (G, q, c, O, x)$. Increasing $x$ along a blocking edge and possibly decreasing it along worse edges is a better response step: through this operation, both end vertices of the blocking edge come better off. The definition of better and best response strategies is not as straightforward as it is in the matching instance with unit quotas and capacities. Here, the possible outcomes of a player are ordered lexicographically.

Although lexicographical order seems to be a natural choice, it is somewhat against the convention when discussing stable allocations. In most cases, when comparing the position of an agent in two stable allocations, the so called \emph{min-min criterion} is used~\cite{DBLP:journals/mor/BaiouB02}. According to this rule, the agent prefers the allocation in which its worst positive edge is ranked higher. In order to make use of such an ordering relation, each vertex has to have the same allocation value in all stable solutions. Therefore here, when studying and comparing arbitrary feasible allocations, this concept proves to be counter-intuitive.

In our instance $\mathcal{I}$, jobs form the active side $J$, while machines $M$ are passive players. For sake of simplicity we denote the residual capacity $c(jm)-x(jm)$ of edge $jm$ by $\bar{x}(jm)$ and similarly, the residual quota $q(v)-x(v)$ of vertex $v$ by~$\bar{x}(v)$.

An active player $j$ having some blocking edges is chosen to perform a \emph{best response step} on the current allocation~$x$. Amongst $j$'s blocking edges, let $jm$ be the one ranked highest on $j$'s preference list. The aim of player $j$ is to reach its best possible lexicographical position via increasing~$x(jm)$. To this end, $j$ is ready to allocate all its remaining quota $\bar{x}(j)$ to~$jm$, moreover, it reassigns allocation from all edges worse than $jm$ to~$jm$. Thus, $j$ aims to increase $x(jm)$ by $\bar{x}(j) + x(\text{edges dominated by }jm \text{ at }j)$. To preserve feasibility, $x(jm)$ is not increased by more than~$\bar{x}(jm)$. The passive player $m$ agrees to increase $x(jm)$ as long as it does not lose allocation on better edges. This constraint gives the third upper bound, $\bar{x}(m) + x(\text{edges dominated by }jm \text{ at }m)$. To summarize this, in a best response step $x(jm)$ is increased by the following amount.
\begin{align*}
A := \min\{\bar{x}(j) + x(\text{edges dominated by }jm \text{ at }j), \bar{x}(jm), \\ \bar{x}(m) + x(\text{edges dominated by }jm \text{ at }m)\}
\end{align*}
Once this $A$ and the new $x(jm)$ is determined, $j$ and $m$ fill their remaining quota, then refuse allocation on their worst allocated edges, until $x$ becomes feasible.

\emph{Better response steps} are much less complicated to describe. The chosen active vertex $j$ increases the allocation on an arbitrary blocking edge~$jm$. Both $j$ and $m$ are allowed to refuse allocation on worse edges than~$jm$. This rule guarantees that $j$'s lexicographical situation develops and that the change is myopic for both vertices. By definition, best response steps are always better response steps at the same time. The execution of a single better response step consists of modifications on at most $|\delta(j)| + |\delta(m)| - 1 \leq |V|-1$ edges.

In our example above, $j_3$ and $m_1$ mutually agree to allocate value 1 to $j_3m_1$. If best response strategies are played, $m_1$ refuses 0.2 amount of allocation from~$j_1$, while $j_3$ reduces $x(j_3m_2)$ to 0.9. Through this step, they induce blocking somewhere else in~$G$: now $j_4m_2$ blocks the new~$x$, because $m_2$ lost some allocation. Thus, another myopic change would be to increase $x(j_4m_2)$, and so on. A better response step of the same vertex $j_3$ would be for example to increase $x(j_3 m_1)$ to 1, while refusing $j_3m_2$ entirely. To keep feasibility, $m_1$ has to refuse 0.2 amount of allocation from~$j_1$.

\section{Correlated markets}
\label{sec:corr}

Before tackling the general paths to stability problem, we first restrict ourselves to instances with special preference profiles. In this section, we study the case of stable allocations on an uncoordinated market with correlated preferences. Later we will prove that the convergence time of random best and better response strategies is exponential on general instances. By contrast, here we show that on correlated markets, random best response strategies terminate in expected polynomial time, even in the presence of irrational data. At the end of this section we also elaborate on the behavior of better response dynamics.

\begin{definition}[correlated market]
An allocation instance is \emph{correlated}, if there is a function  $f: E \rightarrow \mathbb{N}$ such that $\rank_v(uv) < \rank_v(wv)$ if $f(uv) < f(wv)$ for every $u, v, w \in V$ and no two edges have the same $f$ value.
\end{definition}

Correlated markets are also called \emph{instances with globally ranked pairs} or \emph{acyclic markets}. The latter property means that there is no cycle of edges such that every edge is preferred to the previous one by their common vertex. Abraham et al.~\cite{enlighten4494} show that acyclic markets are correlated and vice versa. The graph pictured on Figure~\ref{fig:ex} is not correlated: edges $(j_3m_3, j_4m_3, j_4m_2, j_3m_2)$ build a preference cycle.
Ackermann et al.~\cite{ackermann2011uncoordinated} were the first to prove that random better and best response dynamics reach a stable matching on correlated markets in expected polynomial time. Using a similar argumentation, we extend their result to allocation instances.

\begin{theorem}
\label{th:cor}
On correlated allocation instances with real-valued input data, random best response dynamics reach a stable solution in expected time~$O(|V|^2 |E|)$.
\end{theorem}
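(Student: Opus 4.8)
The plan is to exploit the global preference order guaranteed by correlation to build a purely \emph{combinatorial} potential that is insensitive to the actual (possibly irrational) allocation values. Order the edges $e_1, e_2, \dots, e_{|E|}$ by increasing $f$-value, so that $e_i$ is globally more preferred than $e_j$ whenever $i<j$; by the definition of correlation this order refines every vertex's preference list. First I would fix a reference allocation $x^*$ by the obvious greedy rule: processing the edges in the order $e_1, e_2, \dots$ and setting, for $e_i = jm$,
\[
x^*(e_i) := \min\bigl\{\, c(e_i),\ q(j) - \textstyle\sum_{e \prec e_i,\, e \ni j} x^*(e),\ q(m) - \textstyle\sum_{e \prec e_i,\, e \ni m} x^*(e)\,\bigr\}.
\]
A short check shows $x^*$ is a feasible allocation and that it is stable: if $x^*(e_i) < c(e_i)$, then one of the two quota bounds is tight, so the corresponding endpoint is saturated using only edges at least as good as $e_i$, and this kills the blocking condition at that endpoint.

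The potential I would track is the length of the correct prefix,
\[
\Phi(x) := \max\{\, k : x(e_i) = x^*(e_i) \text{ for all } i \le k \,\} \in \{0,1,\dots,|E|\}.
\]
The heart of the argument is a locking lemma with two parts. \emph{(i) Monotonicity:} a best response step never decreases $\Phi$. Indeed, if $e_1,\dots,e_k$ already carry their $x^*$-values, then none of them blocks $x$ — the same saturation argument as for $x^*$ applies, since correctness of the prefix forces the endpoints of each $e_i$ to have no allocation on strictly worse edges whenever a quota bound is tight — so the edge chosen for any best response step is some $e_{k'}$ with $k'>k$; such a step only raises $x(e_{k'})$ and lowers allocation on edges strictly worse than $e_{k'}$, hence leaves $e_1,\dots,e_k$ untouched. \emph{(ii) Progress:} if $\Phi(x)=k<|E|$, then feasibility together with correctness of the prefix forces $x(e_{k+1}) < x^*(e_{k+1})$, from which one verifies that $e_{k+1}$ is blocking and is in fact the best blocking edge of its job endpoint $j$; moreover a single best response step of $j$ raises $x(e_{k+1})$ to exactly $x^*(e_{k+1})$, so $\Phi$ increases by at least one.

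Two consequences follow. First, since $\Phi(x)<|E|$ always exhibits a blocking edge, any stable allocation must satisfy $\Phi=|E|$, i.e.\ equals $x^*$; this reproves uniqueness and identifies the target. Second, for the random dynamics, whenever $\Phi(x)=k<|E|$ there is a specific active vertex (the job endpoint of $e_{k+1}$) whose selection strictly increases $\Phi$, and selecting it has probability at least $1/|J| \ge 1/|V|$. Hence the expected number of steps needed to raise $\Phi$ by one is at most $|V|$, and since $\Phi$ climbs from $0$ to $|E|$ in integer jumps, the expected number of best response steps is $O(|V|\,|E|)$. Multiplying by the $O(|V|)$ cost of executing one step (it touches at most $|V|-1$ edges) yields the claimed $O(|V|^2|E|)$ bound.

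The main difficulty is precisely what the combinatorial potential is designed to circumvent: with real, even irrational, capacities and quotas, the value moved in a single step can be arbitrarily small, so no argument that counts allocation increments can bound the running time. The technical weight therefore lies entirely in the locking lemma — and, within it, in part~(ii), namely verifying that a correct prefix makes $e_{k+1}$ block, makes it the best blocking edge of its job, and lets one step set it to exactly its greedy value. This is what guarantees that progress is measured in the discrete quantity $\Phi$ rather than in the continuous allocation values, which is what makes the bound robust to irrational input.
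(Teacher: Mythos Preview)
Your proposal is correct and follows essentially the same approach as the paper: both arguments exploit the global $f$-order to lock edges to their greedy values one by one, observing that once locked an edge is never disturbed (neither endpoint has a better edge along which to refuse) and that the next unlocked edge is fixed in a single best response step of its job with probability at least $1/|J|$. Your potential $\Phi$ and explicit greedy allocation $x^*$ are a clean formalization of what the paper states more informally as an inductive ``fix the minimum, recurse'' argument, but the underlying mechanism and the resulting $O(|V|^2|E|)$ count are identical.
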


\begin{proof}
	Before studying paths to stability we show that on correlated instances, the set of solutions has cardinality one. There is an absolute minimum of~$f(jm)$. The single edge $jm$ with this minimal $f$ value must be in all stable allocations with value $\min{\left\{c(jm), q(j), q(m)\right\}}$, otherwise it is blocking. Fixing $x$ on $jm$ and decreasing the quotas of $j$ and $m$ respectively leads to another correlated allocation instance. On this instance, the stable solutions are exactly the stable solutions of the original instance without~$jm$. This leads to an inductive algorithm that proves that there is a unique stable allocation on correlated markets. We will show that random best response dynamics reach this unique solution in expected polynomial time.

	Whenever a job $j$ with an unsaturated edge $jm$ of an absolute minimal $f(jm)$ is chosen to submit an offer, its best response strategy is to increase $x$ on~$jm$. Due to this single best response operation performed by $j$, $x(jm)=\min{\left\{c(jm), q(j), q(m)\right\}}$ is reached. The probability that a vertex $j \in J$ is chosen to take the next step is at least~$\frac{1}{|J|}$. As mentioned above, one best response step requires at most $O(|V|)$ modifications. Thus, in order to reach $x(jm)=\min{\left\{c(jm), q(j), q(m)\right\}}$ on the best edge in~$G$, $|J| \cdot |V| \sim O(|V|^2)$ modifications are needed in expected time. After this $jm$ with minimal $f$ value reached its final position in the unique stable allocation, $x(jm)$ will never be reduced, because neither~$j$, nor $m$ have a better neighboring edge. Thus, $x(jm)$ can be fixed, and a new minimum of $f$ can be chosen for the same procedure as before. The number of iterations is bounded from above by the number of edges in the graph. The unique stable allocation is reached this way in $O(|V|^2 |E|)$ time in expectation.\end{proof}

In order to establish a similar result for better response dynamics in real-valued instances, an exact interpretation of random events would be needed. In the matching case, best and better response dynamics differ exclusively in the rank of the chosen blocking edge: when playing best response strategy, the best blocking edge is chosen by an active vertex~$j$. In contrast to this, here, better response steps differ also in the amount of modification and in the edges chosen to refuse allocation along. The first factor indicates a continuous sample space.

If we assume that any better response step results in reassigning the highest possible allocation value to an arbitrary blocking edge, an analogous proof can be derived. The only difference is that after $j$ is chosen, the expected time of reaching $x(jm)=\min{\left\{c(jm), q(j), q(m)\right\}}$ is larger. In this case, $j$ chooses $jm$ with probability at least~$\frac{1}{|\delta(j)|}$. This implies that reaching the stable allocation value on the best edge takes $|\delta(j)| \cdot (|\delta(j)|+|\delta(m)|-1) \sim O(|V|^2)$ steps in expectation. In total, for all vertices $j \in J$ and all edges the algorithm takes $O(|V|^3 |E|)$ steps in expectation.


\section{Best and better responses with rational data}
\label{sec:rat}

In this section, the case of allocations on an uncoordinated market \emph{with rational data} is studied. As already mentioned, better and best response dynamics can cycle in such instances. We describe two deterministic methods, a better-response and a best-response algorithm that yield stable allocations in finite time. The main idea of our algorithms is to distinguish between blocking edges based on the type of blocking at the job: dominance or free quota.

A blocking edge can be of two types. Recall point~\ref{j_dom} of Definition~\ref{def_st_all}: if $jm$ blocks~$x$, then $x(j) < q(j)$ or $j$ prefers $jm$ to its worst edge with positive value in $x$. We talk about \emph{blocking of type~I} in the latter case, if $jm$ blocks $x$ because $j$ prefers $jm$ to its worst edge having positive value in~$x$. \emph{Blocking of type~II} means that $j$ has no allocated edge worse than~$jm$, but $j$ has not filled up its quota yet,~$x(j) < q(j)$. Note that the reason of the blocking property at $m$ is not involved when defining the two groups.

\subsection{Better response dynamics}


First, we provide a deterministic algorithm that constructs a finite path to stability from any feasible allocation. In the first phase of our algorithm, only blocking edges of type~I are chosen to perform myopic changes along. The active vertices (jobs) choose one of their blocking edges of type~I, not necessarily the best one. In all cases, withdrawal is executed along worst allocated edges. The amount of allocation set to the better edge is determined in such a way that at least one edge or a vertex becomes saturated or empty. Active vertices replace their worst edges with better ones, even if they had free quota. When no blocking edge of type~I remains, the second phase starts. The allocation value is increased on blocking edges of type~II such that they cease to be blocking. The runtime of our algorithm is exponential. Later, in Section~\ref{sec:irrat} we will also show that this algorithm can be accelerated such that a stable solution is reached in polynomial time. The detailed proof of correctness, a pseudocode and execution on a sample instance are also provided.

\begin{theorem}
\label{th:better_rat}
For every allocation instance with rational data and a given feasible allocation~$x$, there is a finite sequence of better responses that leads to a stable allocation.
\end{theorem}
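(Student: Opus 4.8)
The plan is to first collapse the continuous nature of the problem to a finite combinatorial one, and then to argue that the two-phase procedure cannot cycle. Since $q$, $c$ and the initial $x$ are rational, I would scale everything by a common denominator and assume without loss of generality that all data are integral. The purpose of the ``saturate or empty'' rule is that the amount pushed along a blocking edge in each better response step is a breakpoint of the feasibility constraints, i.e.\ a value at which some $x(e)$ reaches $0$ or $c(e)$, or some $x(v)$ reaches $q(v)$. Parametrising a step by the amount added to the chosen edge, every $x(e)$ evolves piecewise linearly with slopes in $\{-1,0,+1\}$ (unit rate onto the blocking edge, unit rate off the worst edges being emptied in turn), so the stopping value is an integer and, starting from integral values, all $x(e)$ remain integral, with an integer remainder left on the last partially emptied edge. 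Consequently every reachable allocation is integral and bounded by $\sum_{v} q(v)$, so only finitely many exist; it then suffices to exhibit a resolution order that never revisits an allocation.

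Next I would analyse Phase~I, where only type~I blocking edges are resolved. Here the acting job $j$ moves allocation from its worst positive edges onto a better edge $jm$, so $j$ strictly improves its lexicographic position, and the step saturates or empties at least one edge or vertex. The difficulty — and this is the heart of the theorem — is that the compensating withdrawals have side effects: when $m$ rejects its worst positive partner $j'$, or when $j$ abandons a worse machine $m''$, those vertices lose allocation and may worsen. Hence neither the job side nor the machine side improves monotonically on its own, which is precisely why unrestricted dynamics can cycle, as the best response example of Figure~\ref{fig:cycle} shows. To force progress I would resolve type~I blocking edges in a disciplined, deferred-acceptance-like order in the spirit of Roth and Vande Vate and of Ackermann et al.: maintain a region of the instance on which the allocation is already free of blocking edges, bring in one violated edge at a time, and propagate the induced rejections as a chain of proposals until the region can be enlarged. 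Within such a chain the relevant machines only improve, so the chain terminates, and since the blocking-free region grows by a combinatorial event each time, Phase~I halts with an allocation having no type~I blocking edge.

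Phase~II is comparatively benign. Once no type~I blocking edge is present, every remaining blocking edge $jm$ is of type~II, so $j$ has free quota and no positive edge worse than $jm$. Resolving such an edge lets $j$ fill free quota along $jm$ without abandoning any machine, so no machine other than $m$ is touched and $m$ itself improves; thus the machine side now genuinely improves monotonically. The one thing to verify is that Phase~II does not regenerate type~I blocking edges: filling each job's blocking edges best first guarantees that the edge a job newly saturates becomes its worst positive edge, so no previously dominated edge can turn into a type~I violation, and a short case analysis shows the rejected jobs and the filled machine cannot acquire one either. Monotonicity together with the finite state space then yields termination of Phase~II at an allocation with no blocking edge of either type, which is a stable allocation.

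I expect the main obstacle to be Phase~I: pinning down a resolution order for type~I blocking edges together with a potential — the growing blocking-free region, or an equivalent lexicographic quantity restricted to each proposal chain — that is provably monotone despite the two-sided collateral withdrawals. This is exactly the step where the restriction to type~I blocking must be exploited, and it is the direct analogue of the convergence argument of Ackermann et al.\ for one-to-one markets, which the paper notes carries over to the capacitated setting.
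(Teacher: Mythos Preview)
Your two-phase decomposition and your treatment of Phase~II (best blocking edge first, machines improve monotonically, case analysis that no type~I edge reappears) match the paper's proof. The integrality reduction is also fine. The gap is Phase~I.

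You correctly diagnose that collateral rejections spoil one-sided monotonicity, and you then reach for a Roth--Vande Vate style argument: a disciplined resolution order with a growing blocking-free region. You do not carry this out and flag it as the ``main obstacle''. So Phase~I is left as a plan rather than a proof, and the plan is more intricate than necessary.

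The paper's Phase~I argument is much simpler and needs no disciplined order at all: an \emph{arbitrary} type~I blocking edge may be chosen each round. The point you are missing is a global potential. In Phase~I the acting job $j$ always withdraws from $r(j)$ exactly the amount $A$ it adds to $jm$ (jobs act as if their quota were frozen at the current $x(j)$, even when they have slack). Hence the total allocation $\Theta_1(x)=\sum_{j\in J}x(j)$ never increases: it is unchanged when $m$ has free quota, and it strictly drops by $A$ when $m$ is full and must reject its worst partner. In the first case the rank-weighted sum $\Theta_2(x)=\sum_{jm\in E}x(jm)\,\rank_j(jm)$ strictly decreases, since $j$ shifts $A$ units from $r(j)$ to the better edge $jm$. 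Thus the lexicographic pair $(\Theta_1,\Theta_2)$ strictly decreases at every Phase~I step, rationality bounds each decrement away from zero, and Phase~I terminates. No growing region, no proposal chains, no special order are needed; your worry that ``neither side improves monotonically on its own'' is dissolved by looking at $(\Theta_1,\Theta_2)$ instead of at either side separately.
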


The main idea of the proof is the following. We need to keep track of the change in total allocation value and in the lexicographical position of the active vertices simultaneously. In one step of the first phase along edge~$jm$, either both $j$ and $m$ refuse edges, thus, the allocation value $|x| = \sum_{j \in J} x(j)$ decreases, or only $j$ does so, keeping $|x|$ and improving its situation lexicographically. Since both procedures are monotone and the second one does not impair the first one, the first phase terminates. Termination for the second phase is implied by the fact that passive vertices improve their lexicographical situation in each step. 

\begin{proof}

Recall our example on Figure~\ref{fig:ex}. The unique blocking edge $j_3m_1$ is of type~I, because $j_3$, its active vertex prefers edge $j_3m_1$ to its worst allocated edge~$j_3m_2$.

In the first phase, the jobs propose along \textit{arbitrary} blocking edges of type~I. We will show that this process ends with an allocation where no job has a blocking edge of type~I. In the second phase, the jobs propose along their \textit{best} blocking edges of type~II. Later we will see that during this phase until termination, no job gets a blocking edge of type~I. A pseudocode is provided after the description of both phases.

\textbf{First phase. }
In one step, an arbitrary blocking edge $jm$ of type~I is chosen. Both end vertices, $j$ and $m$ may refuse some allocation along edges when increasing $x$ on~$jm$. Job~$j$ has a \emph{refusal pointer} $r(j)$ that denotes the worst allocated edge to~$j$, if any exists. Similarly, $r(m)$ stands for the worst currently allocated edge of~$m$. A step of Phase~I consists of two or three operations, each along $jm, r(j)$ and possibly along~$r(m)$. Two operations take place, if $m$ has not filled up its quota yet. In this case, $x(r(j))$ will be decreased by $A=\min{\left\{x(r(j)), \bar{x}(jm), \bar{x}(m)\right\}}$. At the same time, $x(jm)$ is increased by the same amount. Depending on which expression is the minimal one, edge $r(j)$ becomes empty or $jm$ becomes saturated or $m$ fills up its quota. Note that $r(m)$ plays no role, because $m$ does not aim to refuse any allocation. In the remaining case, if $m$ has a full quota, three operations take place, since $m$ has to refuse some allocation. The amount of allocation we deal with is now $A=\min{\left\{x(r(j)), \bar{x}(jm), x(r(m))\right\}}$. The allocation on the blocking edge $jm$ will be increased by this~$A$, on the other two edges it will be decreased by~$A$, until one of them becomes empty or saturated. We emphasize that whenever a job $j$ with free quota adds a new edge better than its worst allocated edge to~$x$, it withdraws some allocation from the worst edge. This is what we earlier phrased as freezing quotas: in Phase~I, $j$ acts as if $q(j)$ would be the current allocation value $x(j)$.

We return to our example again. It has already been mentioned that the unique blocking edge $j_3m_1$ is of type~I. The refusal pointer $r(j_3)$ is~$j_3m_2$. Since $m_1$ has not filled up its quota yet, its refusal pointer $j_1m_1$ is irrelevant at the moment. Due to the same reason, two operations take place. The amount we augment with is $\min{\{x(j_3m_2), \bar{x}(j_3m_1), \bar{x}(m_1)\}} = 0.8$. After this operation, $x(j_3m_1) = 0.8, x(j_3m_2) = 0.2$, and $j_3m_1$ is still a Phase~I blocking edge. Since $x(m_1) = q(m_1)$ holds now, three operations are executed. $A = \min{\{x(j_3m_2), \bar{x}(j_3m_1), x(j_1m_1)\}} = 0.2$. Now $j_3m_1$ is saturated, hence it ceases to be blocking. During the first operation, $j_4m_2$ became blocking of type~I, because $m_2$ lost allocation. In the next step, one unit of allocation is reallocated to $j_4m_2$ from~$j_4m_3$. But then, $j_3m_3$ becomes blocking of type~I, and so on.

We use the following multicriteria potential function in order to show that the process does not cycle:
$$\Theta(x) := \left(\Theta_1(x),\Theta_2(x)\right) := \left(\sum_{j \in J}{x(j)}, \quad \sum_{j \in J}{\sum_{jm \in E}{x(jm) \rank_j(jm)}}\right)$$

Recall that $\rank_j(jm)$ stands for the rank of $jm$ on $j$'s preference list. The smaller $\rank_j(jm)$ is, the better $m$ for $j$ is. In the expression above, both components are non-negative for any feasible allocation~$x$. They both have an upper bound as well: $$0 \leq \Theta_1(x) \leq \sum_{jm \in E} {c(jm)}, \quad 0 \leq \Theta_2(x) \leq |J| \cdot \max_{jm \in E} {c(jm)} \cdot \max_{j \in J}{|\delta(j)|}.$$ We will show that this function lexicographically decreases in each step of the procedure. The process terminates if the amount of increment is always greater than a fixed positive constant. If all data are rational, this is guaranteed.

Considering the potential function, we need to keep track of those two jobs that proposed or got refused, since the contribution of all other jobs remains the same. Thus, their terms in the summations of $\Theta(x)$ do not change.

As mentioned above, a step consists of either two or three edges changing their value in~$x$. In the first case, when only two edges change their value in~$x$, there is only one job~$j$ that modifies its contribution. The allocated value of this vertex remains the same, thus, $\Theta_1(x)$ does not change. But $\Theta_2(x)$ does, because some allocation will move from a less preferred edge to~$jm$. This ensures the decrement of~$\Theta_2(x)$. In the second case, where three edges are involved, there is a job~$j$ that improves its lexicographical position, and another job $j'$ that loses allocation. The effect of the first change at $j$ is just as above, $\Theta_1(x)$ remains the same, $\Theta_2(x)$ decreases. Losing allocation for $j'$ means that both terms decrease, since $x(j')$ decreases.

\textbf{Second phase.}
In the second phase, we work with the original quota function. When developing the allocation along a blocking edge $jm$ of type~II, $m$ may refuse some allocation, but $j$ may not, since the reason of blocking is that $j$ has not filled up its (original) quota yet. Thus, we do not need the pointer $r(j)$ any more. One step consists of changes along one edge if~$x(m) < q(m)$, or along two edges otherwise. If $m$ has not filled up its quota yet, then we simply assign as much allocation to $jm$ as possible without exceeding $q(j), q(m)$ or~$c(jm)$. If $m$ has to refuse something from a job $j'$ in order to accept better offers from~$j$, we improve $m$'s position until $j'm$ becomes empty or $jm$ becomes saturated or $j$ gets its quota filled up.

First, we need to see that no step can induce a blocking edge of type~I. One step in Phase~II leaves all vertices but $j,m$ and the possibly refused $j'$ unchanged. Thus, if there is a blocking edge of type~I after the modification, it must be incident to one of those vertices. The three cases are the following.
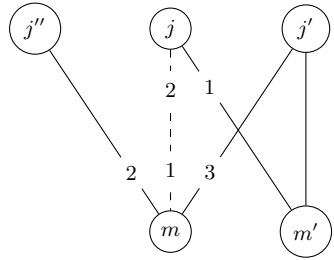
\begin{figure}[H]
\begin{center}
\begin{tikzpicture}[scale=0.9, transform shape]
\tikzstyle{vertex} = [circle, draw=black]
\tikzstyle{edgelabel} = [circle, fill=white]
\node[vertex] (j'') at (0, 3) {$j''$};
\node[vertex] (j) at (2, 3) {$j$};
\node[vertex] (j') at (4, 3) {$j'$};
\node[vertex] (m) at (2, 0) {$m$};
\node[vertex] (m') at (4,0) {$m'$};
\draw [] (j') -- node[edgelabel, near end] {3} (m);
\draw [] (j') -- (m');
\draw [] (j'') -- node[edgelabel, near end] {2} (m);
\draw [dashed] (j) -- node[edgelabel, near start] {2} node[edgelabel, near end] {1} (m);
\draw [] (j) -- node[edgelabel, near start] {1} (m');
\end{tikzpicture}
\end{center}
\caption{Edges affected by one myopic operation}
\end{figure}

\begin{itemize}
	\item $j''m$ blocks~$x$. The position of $m$ became lexicographically better, thus, no new blocking edge incident to $m$ could be introduced. The existing blocking edges $j''m$ of type~II cannot become of type~I, because $j''$'s position remained unchanged.
	\item $jm'$ (or $jm$) blocks~$x$. The only change at $j$ is that $x(jm)$ increases, thus, $j$ also develops its  lexicographical position. From this follows that no new blocking edge incident to $j$ appeared. Blocking edges of type~II change their type of blocking if $j$ increased its allocation on a worse edge. This is the point where we use the rule of choosing the best blocking edge in Phase~II as~$jm$, which contradicts to this assumption.
	\item $j'm'$ (or $j'm$) blocks~$x$. The only change in $j'$'s neighborhood is that $x(j'm)$ decreases. After this step, consider an unsaturated edge $j'm'$ preferred by $j'$ to its worst allocated edge. Since no machine worsens its lexicographical position in Phase~II, if $j'm'$ dominates the new allocation~$x$, it already dominated the previous allocation. Thus, $j'm'$ must have been a blocking edge prior to the modification.  
\end{itemize}

From this we see that once Phase~II has started, Phase~I can never return. The last step ahead of us is to show that Phase~II may not cycle. But this follows from the fact that in each step exactly one machine strictly improves its lexicographical situation, while all other machines maintain the same allocation as before. In case of a rational input, this improvement is bounded from below, thus, the second phase of the algorithm terminates.

\begin{algorithm}[H]
\renewcommand{\thealgorithm}{} 
\caption{Two-phase better response algorithm}
\label{alg:two_ph}
\begin{algorithmic}
\While{$\exists j\in J$ with a blocking edge of type~I}
	\State \textsc{Improvement I}($j$)
\EndWhile
\While{$\exists j\in J$ with a blocking edge of type~II}
	\State \textsc{Improvement II}($j$)
\EndWhile
\end{algorithmic}
\end{algorithm}

\begin{algorithmic}
\begin{figure}[H]
\centering
\begin{tabular}[H]{cc}
\begin{minipage}[t]{3.0 in}
\Procedure{Improvement I}{$j$}
	\State $jm \leftarrow$ blocking edge of type~I of $j$
	\If{$x(m) < q(m)$}
		\State  $A := \min{\left\{x(r(j)),\bar{x}(jm), \bar{x}(m)\right\}}$
		\State  $x(r(j)) := x(r(j))-A$
		\State  $x(jm) := x(jm)+A$
	\Else
		\State  $A := \min{\left\{x(r(j)),\bar{x}(jm), x(r(m))\right\}}$
		\State  $x(r(j)) := x(r(j))-A$
		\State  $x(jm) := x(jm)+A$
		\State  $x(r(m)) := x(r(m))-A$
	\EndIf
\EndProcedure
\end{minipage}&

\hfill
\begin{minipage}[t]{3.0 in}
\Procedure{Improvement II}{$j$}
	\State $jm \leftarrow$ best blocking edge of type~II of $j$
	\If{$x(m) < q(m)$}
		\State  $A := \min{\left\{\bar{x}(jm), \bar{x}(j), \bar{x}(m)\right\}}$
		\State  $x(jm) := x(jm)+A$		
	\Else
		\State  $A := \min{\left\{x(r(m)),\bar{x}(jm), \bar{x}(j)\right\}}$
		\State  $x(jm) := x(jm)+A$
		\State  $x(r(m)) := x(r(m))-A$
	\EndIf
\EndProcedure
\end{minipage}
\end{tabular}
\end{figure}
\end{algorithmic}

\end{proof}

The duration of both phases strongly depends on the capacities and quotas. The examples below show two bad instances. The capacity is $N$ on all edges, where $N$ is an arbitrarily big integer. Quotas are marked above and below the vertices. The initial allocation for Phase~I is $N$ on $j_1m_1$ and on $j_2m_2$ and zero on the remaining two edges. The first phase performs $N$ augmenting steps along the same cycle. Phase~II terminates after $N$ iterations in the second instance, starting with the empty allocation.
\begin{figure}[H]
\begin{center}
\begin{tikzpicture}[scale=0.9, transform shape]
\tikzstyle{vertex} = [circle, draw=black]
\tikzstyle{edgelabel} = [circle, fill=white]
\node[vertex] (j_1) at (0, 3) {$j_1$};
\node[above=0.2 cm of j_1] {$N$};
\node[vertex] (j_2) at (4, 3) {$j_2$};
\node[above=0.2 cm of j_2] {$N$};
\node[vertex] (m_1) at (0, 0) {$m_1$};
\node[below=0.2 cm of m_1] {$N$};
\node[vertex] (m_2) at (4,0) {$m_2$};
\node[below=0.2 cm of m_2] {$N+1$};
\draw [] (j_1) -- node[edgelabel, near start] {1} node[edgelabel, near end] {2} (m_2);
\draw [ultra thick] (j_1) -- node[edgelabel, near start] {2} node[edgelabel, near end] {1} (m_1);
\draw [] (j_2) -- node[edgelabel, near start] {1} node[edgelabel, near end] {2} (m_1);
\draw [ultra thick] (j_2) -- node[edgelabel, near start] {2} node[edgelabel, near end] {1} (m_2);
\node[vertex] (j'_1) at (8, 3) {$j_1$};
\node[above=0.2 cm of j'_1] {$N+1$};
\node[vertex] (j'_2) at (12, 3) {$j_2$};
\node[above=0.2 cm of j'_2] {$N$};
\node[vertex] (m'_1) at (8, 0) {$m_1$};
\node[below=0.2 cm of m'_1] {$N$};
\node[vertex] (m'_2) at (12,0) {$m_2$};
\node[below=0.2 cm of m'_2] {$N$};
\draw [] (j'_1) -- node[edgelabel, near start] {1} node[edgelabel, near end] {2} (m'_2);
\draw [] (j'_1) -- node[edgelabel, near start] {2} node[edgelabel, near end] {1} (m'_1);
\draw [] (j'_2) -- node[edgelabel, near start] {1} node[edgelabel, near end] {2} (m'_1);
\draw [] (j'_2) -- node[edgelabel, near start] {2} node[edgelabel, near end] {1} (m'_2);
\end{tikzpicture}
\end{center}
\caption{Worst-case instances}
\label{fig:worst_case}
\end{figure}
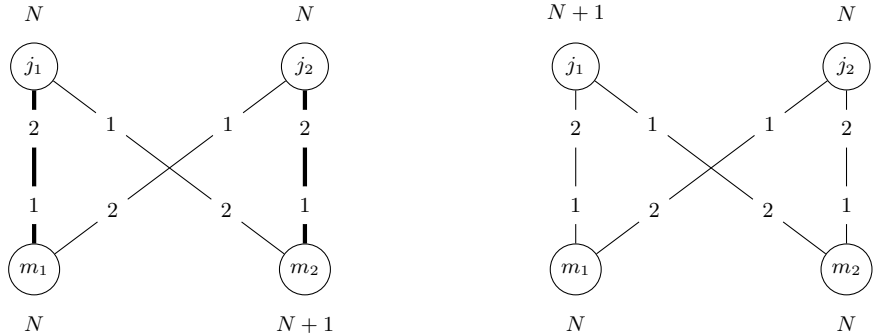

This algorithm also proves an important result regarding rational random better response processes. If the input is rational (there is a smallest positive number that can be represented as a linear combination of all data), it is clearly worthwhile to restrict the set of feasible better response modifications to the ones that reassign a multiple of this unit. For this reason, the set of reachable allocations is finite and they can be seen as states of a discrete time Markov chain. Our algorithm proves that from any state there is a finite path to an absorbing state with positive probability.

\begin{theorem}
\label{th_random_better}
	In the rational case, random better response strategies terminate with a stable allocation with probability one.
\end{theorem}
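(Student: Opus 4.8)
The plan is to recast the random dynamics as a finite absorbing Markov chain and to invoke Theorem~\ref{th:better_rat} for reachability of stability. First I would fix a rational unit. Since all data are rational, there is a largest $\varepsilon > 0$ such that every capacity $c(e)$, every quota $q(v)$, and every value $x(e)$ of the initial allocation is an integer multiple of~$\varepsilon$; this $\varepsilon$ is exactly the smallest positive number representable as an integer combination of the data. I would then restrict the dynamics to better response steps that reassign an integer multiple of~$\varepsilon$. By induction along any run this keeps every edge value a multiple of~$\varepsilon$: in both \textsc{Improvement I} and \textsc{Improvement II} the reassigned amount $A$ is a minimum of quantities such as $x(r(j))$, $\bar{x}(jm)$, $\bar{x}(m)$ and $x(r(m))$, each of which is a difference of multiples of~$\varepsilon$ and hence a multiple of~$\varepsilon$. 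Consequently every reachable allocation assigns to each edge~$e$ a value in the finite set $\{0, \varepsilon, 2\varepsilon, \dots, c(e)\}$, so there are only finitely many reachable allocations.

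These finitely many allocations form the states of a discrete-time Markov chain, a transition being a single random better response step, and the stable allocations are precisely its absorbing states, since an allocation admits no better response step exactly when it has no blocking edge. The decisive gain from discretization is that each individual move of the deterministic algorithm becomes a positive-probability event: an active vertex is chosen with probability at least~$1/|J|$, one of its at most~$|\delta(j)|$ blocking edges with positive probability, and---unlike in the continuous model discussed after Theorem~\ref{th:cor}---the reassigned amount is now drawn from a finite set and is therefore hit with positive probability. Hence Theorem~\ref{th:better_rat} yields, from every state, a finite path of strictly positive probability to an absorbing (stable) state.

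I would finish with the standard absorbing-chain estimate. Let $T$ be the maximal length of these witnessing paths taken over all transient states and $p > 0$ the minimum of their probabilities; both are finite because the state space is finite. Then from any transient state the chain is absorbed within $T$ steps with probability at least~$p$, so the probability that it is still transient after $kT$ steps is at most $(1-p)^k$, which tends to~$0$ as $k \to \infty$. Therefore a stable allocation is reached with probability one.

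The part I expect to be the genuine obstacle is not the Markov-chain machinery but the discretization step. One must justify that restricting the random process to multiples of~$\varepsilon$ is the correct rational analogue of the unit-step matching model, and, crucially, verify that every amount $A$ produced by the two procedures of Theorem~\ref{th:better_rat} is indeed a multiple of~$\varepsilon$, so that the deterministic path survives inside the finite chain as a sequence of admissible, positive-probability transitions. Once this invariant is established, convergence follows from finiteness alone.
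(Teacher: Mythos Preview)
Your argument is correct and matches the paper's own proof almost verbatim: discretize via a rational unit so that the reachable allocations form a finite Markov chain, use Theorem~\ref{th:better_rat} to exhibit a positive-probability path to an absorbing stable state from every state, and conclude by the standard absorbing-chain argument. The paper presents this in a single short paragraph just before the theorem; your write-up simply makes the same steps explicit, including the invariant that the amounts $A$ produced by \textsc{Improvement I} and \textsc{Improvement II} remain multiples of~$\varepsilon$.
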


Polynomial time convergence cannot be shown, since better response strategies need exponential time to converge even in matching instances~\cite{ackermann2011uncoordinated}.

\subsection{Best response dynamics}

In this subsection, we derive analogous results for best response modifications to the ones established for better response strategies. The main difference from the algorithmic point of view is that instances can be found in which no series of best response strategies terminate with a stable solution in polynomial time. A small example resembles the instance given by Ba\"iou and Balinski~\cite{DBLP:journals/mor/BaiouB02} to prove that the Gale-Shapley algorithm requires exponential time to terminate in stable allocation instances. Let $G$ be a complete bipartite graph on four vertices, with quota $q(j_1)=N+1, q(j_2)=q(m_1)=q(m_2)=N$ and initial allocation $x(j_1m_1) = x(j_2m_2) = N$ for an arbitrary large number~$N$. If the preference profile is chosen to be cyclic, such that $\rank_{j_1}(m_1) = \rank_{j_2}(m_2) = \rank_{m_1}(j_2) = \rank_{m_2}(j_1) = 1$, the unique series of best-response steps consists of $2N$ operations. A path of exponential length to stability can still be found. 

\begin{theorem}
\label{th:best_rat}
	For every allocation instance with rational data and a given feasible allocation~$x$, there is a finite sequence of best responses that leads to a stable allocation.
\end{theorem}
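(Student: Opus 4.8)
The plan is to mirror the two-phase construction used for Theorem~\ref{th:better_rat}, but to insist that in every step the chosen active job proposes along its \emph{best} blocking edge, so that each step is a genuine best response. As already noted in the discussion preceding the theorem, restricting the set of admissible blocking edges to the best one cannot by itself invalidate a termination argument based on per-step monotonicity; the real work is to check that the best-response discipline is compatible with the type-I-then-type-II phasing and that the potential still moves in the right direction.

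The first step I would establish is a structural lemma that makes the phasing well defined: \emph{if a job $j$ has any blocking edge of type~I, then its best blocking edge is also of type~I}. Indeed, if the highest-ranked blocking edge $jm^*$ of $j$ were of type~II, then $j$ would have no allocated edge worse than $jm^*$; but a type~I blocking edge $jm$ is ranked below $jm^*$ and comes with an allocated edge worse than $jm$, hence worse than $jm^*$, a contradiction. Consequently, as long as type~I blocking edges are present, every best response is automatically performed along a type~I edge, so Phase~I of the best-response algorithm never has to touch a type~II edge, exactly as in the better-response case. For Phase~II I would then copy the earlier argument essentially verbatim: each step strictly improves the lexicographical position of exactly one machine while leaving all others unchanged, and the three-case analysis (blocking incident to $j$, to $m$, or to the refused $j'$) shows that choosing the \emph{best} type~II blocking edge prevents any type~I blocking from reappearing, so once Phase~II begins Phase~I never returns. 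Rationality then supplies a uniform positive lower bound on the progress made in each step, converting monotonicity into finiteness.

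The main obstacle lies in Phase~I, namely the reconciliation between the \emph{definition} of a best response and the frozen-quota step used in the better-response proof. A best response increases $x(jm)$ by up to $\bar{x}(j)+x(\text{edges dominated by }jm\text{ at }j)$, so a job with free quota necessarily fills part of it, whereas the better-response Phase~I deliberately freezes the quota and only reallocates from the worst edge. This difference is invisible in the matching case but real for allocations: the reallocating part of a type~I best response keeps $x(j)$ fixed and moves weight to a better edge, which is exactly the $\Theta_2$-decrease at constant $\Theta_1$ of the earlier proof, but the additional quota-filling term raises $x(j)$ and can increase $\Theta_1(x)=\sum_{j\in J}x(j)$. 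Hence the clean lexicographic decrease of $\Theta$ is not immediate for honest best responses, and recovering it is the crux of the argument.

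I would resolve this by exploiting the fact that the theorem only asserts the \emph{existence} of one finite path, which leaves me free to schedule the activations. The key observation is that a saturated job $j$ with $x(j)=q(j)$ has $\bar{x}(j)=0$, so its best response is a pure reallocation that coincides with a frozen Phase~I step and decreases $\Theta$ lexicographically. I would therefore always prefer to activate a job that currently has full quota and a type~I blocking edge, performing genuine $\Theta$-decreasing best responses, and invoke a quota-filling best response only when no such job exists. It then remains to bound the number of these filling steps separately — charging each against the total quota $\sum_{v}q(v)$ and controlling the way machine refusals free quota up again — so that the interleaving of the two kinds of steps still terminates. Making this accounting precise, so that $\Theta$ drives Phase~I to completion under honest best responses, is the one genuinely new ingredient; everything else transfers from Theorem~\ref{th:better_rat}.
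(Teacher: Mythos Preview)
Your structural lemma (the best blocking edge of a job with any type~I blocking edge is itself of type~I) is correct and useful, and your Phase~II argument essentially coincides with the paper's. The divergence is in Phase~I, where you correctly identify the real difficulty but then take a different, and ultimately incomplete, route.

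The paper does \emph{not} try to rescue the potential $\Theta(x)=(\sum_j x(j),\sum_{jm} x(jm)\rank_j(jm))$ from Theorem~\ref{th:better_rat}, nor does it resort to scheduling saturated jobs first. Instead it refines the classification: a best blocking edge $jm$ with $\rank_j(jm)<\rank_j(r(j))$ is of type~I(a) if $\bar{x}(j)<\min\{\bar{x}(jm),\bar{x}(m)+x(\text{edges dominated by }jm\text{ at }m)\}$, and of type~I(b) otherwise. The point of the split is that a best response along a I(a) edge necessarily causes $j$ to reject allocation on $r(j)$, while along a I(b) edge $j$'s free quota alone suffices and $j$ rejects nothing. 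Two potentials are then used in tandem: $\Theta(x)=(\rank_j(r(j)),x(r(j)))$ (the refusal pointer and its load, per job) and $\Psi(x)=-\text{\lex}(m)$ (per machine). A I(a) step strictly decreases $\Theta$; a I(b) step leaves $\Theta$ untouched (no job rejects, no machine rejects) and strictly decreases $\Psi$ since $m$ improves lexicographically. Rationality bounds the step sizes, so Phase~I terminates without any activation policy.

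What this buys is exactly the accounting you leave open. Your scheduling idea has to control the feedback loop in which a machine refusal frees quota at some $j'$, turning $j'$ back into an unsaturated job that may need another ``filling'' step; you acknowledge this is unfinished. The paper sidesteps the loop entirely: filling steps are precisely the I(b) steps, and in a I(b) step $m$ does not refuse anyone (because $\bar{x}(j)$ already achieves the minimum), so no other job's quota is freed and the machine-side potential $\Psi$ carries the decrease. In short, your instinct to separate ``reallocating'' from ``filling'' steps is exactly right, but the clean way to do it is via the I(a)/I(b) split and a second potential on the machine side, not via an activation order whose termination you have not established.
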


\begin{proof}

	Similarly to our method for better response strategies, here we prove that there is a two-phase algorithm that terminates with a stable solution. 
	
	All blocking edges we take into account are best blocking edges of their job~$j$. Depending on their rank compared to $j$'s worst allocated edge $r(j)$, they are either of type~I or type~II. A job $j$'s best blocking edge $jm$ is \begin{itemize}
	\item of type~I(a), if $\rank_j(jm) < \rank_j r(j)$ and \\ $\bar{x}(j) < \min{\{\bar{x}(jm), \bar{x}(m) + x(\text{edges dominated by $jm$ at $m$})\}}$;
	\item of type~I(b), if $\rank_j(jm) < \rank_j r(j)$ and \\ $\bar{x}(j) \geq \min{\left\{\bar{x}(jm), \bar{x}(m) + x(\text{edges dominated by $jm$ at $m$})\right\}}$;
	\item of type~II, if $\rank_j(jm) \geq \rank_j r(j)$.
	\end{itemize}
	The intuitive interpretation of the grouping above is given by the steps needed to execute when $jm$ is chosen to perform a best response operation along. If $jm$ is of type~I(a), then $jm$ can be saturated without any refusal made by~$j$, since $j$ has sufficient free quota. On the other hand, if $j$ agrees to reduce $x(r(j))$ in order to accommodate more allocation on $jm$, then $jm$ is a blocking edge of type~I(b). The remaining case occurs when $jm$ is worse than $r(j)$, that is, $j$ accepts as much allocation from $m$ as much free quota it has. In this case, no rejection is called by~$j$.
	
	In Phase~I, only best blocking edges of type~I(a) and of I(b) are selected. Then, when only type~II blocking edges remain, Phase~II starts. In order to prove finite termination, we introduce two potential functions, $\Theta(x)$ and~$\Psi$. When proving the termination of the first phase, both of them are used, while the second phase is discussed by analyzing the behavior of~$\Psi$. The lexicographical position of machine $m$ is denoted by~\lex$(m)$.
	$$\Theta(x) := (\Theta_{1}(x), \Theta_{2}(x)) := (\rank_j(r(j)), x(r(j)))$$ 
	$$\Psi(x) := - (\text{\lex}(m))$$
	
	\begin{claim}
		The best response step of job $j$ along edge $jm$ of type~I(a) decreases~$\Theta(x)$.
	\end{claim}
	
	\begin{proof}
		Due to the type-defining characteristics listed above, there is a rejection on~$r(j)$. At the same time, $j$ cannot gain allocation on any edge worse than $r(j)$, because the only edge on which $x$ is increased is~$jm$. 
	\end{proof}
	
	\begin{claim}
		The best response step of job $j$ along edge $jm$ of type~I(b) decreases $\Psi(x)$ and does not increase~$\Theta(x)$.
	\end{claim}
	
	\begin{proof}
		 Since $j$ does not reject any allocation, $x(r(j))$ remains unchanged. For the same reason, no machine loses allocation. The only machine whose position changes is $m$ itself: it clearly develops its lexicographical position.
	\end{proof}
	
	For any rational input data, the change in $\Theta(x)$ or $\Psi(x)$ is each round is bounded from below. Since both functions have an absolute minimum, phase~I terminates in finite time.
	
	\begin{claim}
		The best response step of job $j$ along edge $jm$ of type~II decreases~$\Psi(x)$. Moreover, no edge becomes blocking of type~I(a) or~I(b).
	\end{claim}
	
	\begin{proof}
		 During the second phase, no machine loses allocation, thus, their lexicographical position cannot worsen. In addition, \lex$(m)$ develops. This also implies that no edge $j'm'$ dominates $x$ at $m'$ that has not already dominated it before the myopic change. Moreover, edges that lost allocation during that step are the worst-choice edges of $j$, hence they cannot be blocking of type~I(a) or~I(b). If there is an edge $j'm'$ that became blocking of type~I(a) or~I(b), then it is better than the worst edge in $x$ at~$j'$. These edges were already unsaturated before the last step and also already dominated $x$ at both end vertices. It contradicts the fact that best blocking edges are chosen in each step.
	\end{proof}
\end{proof}

The same arguments as above, in Theorem~\ref{th_random_better}, using finite Markov chains imply the result on random procedures.

\begin{theorem}
	In the rational case, random best response strategies terminate with a stable allocation with probability one.
\end{theorem}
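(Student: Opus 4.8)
The plan is to mirror the argument sketched for Theorem~\ref{th_random_better} and reduce the statement to a standard fact about finite Markov chains, with the deterministic algorithm of Theorem~\ref{th:best_rat} supplying the crucial reachability property. First I would pin down a finite state space. Since all input data --- edge capacities, vertex quotas, and the values of the initial allocation $x$ --- are rational, there is a smallest positive number $\epsilon$ that can be written as an integer linear combination of them, and every capacity, quota and initial allocation value is an integer multiple of $\epsilon$. The key observation for best response is that this grid structure is preserved automatically: in a best response step the increment $A$ is a minimum of expressions of the form $\bar{x}(j)+x(\text{dominated edges})$, $\bar{x}(jm)$ and $\bar{x}(m)+x(\text{dominated edges})$, each of which is a linear combination of capacities, quotas and current allocation values. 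Hence if $x$ lies on the $\epsilon$-grid, so does the allocation after the step, and the same holds for the subsequent refusals along worst allocated edges. Thus the set of feasible allocations reachable from $x$ by best response steps is finite.

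Next I would set up the Markov chain. Its states are exactly these finitely many feasible allocations, and a transition corresponds to selecting an active job $j$ that currently has a blocking edge and performing its best response step. This step is deterministic once $j$ is fixed: the best blocking edge of $j$ is unique by strictness of preferences, the increment $A$ is forced, and the refusals are made along uniquely determined worst allocated edges. The randomness therefore lies solely in the choice of $j$, and every job with a blocking edge is selected with probability at least $\tfrac{1}{|J|}>0$. A stable allocation has no blocking edge, so no step can be taken from it; the corresponding state is absorbing.

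The heart of the argument is reachability of an absorbing state from every state, and this is precisely what Theorem~\ref{th:best_rat} provides: from any feasible allocation there is a finite sequence of best responses ending in a stable allocation. Each step of that sequence chooses some particular job with a blocking edge, which in the random process is chosen with positive probability; consequently the whole sequence is realized with positive probability. Hence from every state of the chain there is a positive-probability path to the set of absorbing (stable) states. For a finite Markov chain this implies absorption with probability one: no closed communicating class other than an absorbing state can exist, since every class can still reach a stable allocation. Therefore random best response strategies terminate with a stable allocation almost surely.

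The main obstacle I anticipate is the first step --- justifying finiteness of the state space for best response. In the better response case one simply restricts the otherwise continuous choice of increment to multiples of $\epsilon$; here no such restriction is available, because $A$ is forced by the minimum operation. One must instead verify that this forced value is always an $\epsilon$-multiple whenever the current allocation is, and likewise that the refusal phase never leaves the grid. Once this invariance is established, the remainder is a routine transcription of the finite-Markov-chain argument already used for Theorem~\ref{th_random_better}.
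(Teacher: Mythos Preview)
Your proposal is correct and follows essentially the same approach as the paper: the paper simply invokes ``the same arguments as above, in Theorem~\ref{th_random_better}, using finite Markov chains,'' i.e., finiteness of the state space from rationality, the deterministic algorithm of Theorem~\ref{th:best_rat} providing a positive-probability path to an absorbing stable state, and the standard absorption result for finite Markov chains. Your write-up is in fact more careful than the paper on the one point that genuinely differs from the better-response case---that for best response the increment $A$ is forced rather than chosen, so grid preservation must be checked rather than imposed---and your verification of this is sound.
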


\section{Irrational data - a strongly polynomial algorithm}
\label{sec:irrat}

In our previous section, we relied several times on the fact that in each step, $x$ is changed with values greater than a specific positive lower bound. When irrational data are present, \eg, $q, c$ or $x$ are real-valued functions, this cannot be guaranteed. Hence, our arguments for termination are not any more valid. Moreover, both of our algorithms require exponentially many steps to terminate. In this section, we describe a fast version of our two-phase better response algorithm that terminates in polynomial time with a stable allocation also for irrational input data. We also give a detailed proof of correctness for the first phase and show a construction with which all Phase~II steps can be interpreted as Phase~I operations on a slightly modified instance.

\subsection{Accelerated first phase}

The algorithm and the proof of its correctness can be outlined the following way. A helper graph is built in order to keep track of edges that may gain or lose some allocation. A potential function is also defined, it stores information about the structure of the helper graph and the degree of instability of the current allocation. In the helper graph we are looking for walks to augment along. The amount of allocation we augment with is specified in such a way that the potential function decreases and the helper graph changes. When using walks instead of proposal-refusal triplets, more than one myopic operation can be executed at a time. Moreover, we also keep track of consequences of locally myopic improvements. For example, we spare running time by avoiding reducing allocation on edges that later become blocking anyway.

First, we elaborate on the structure of the helper graph, define alternating walks and specify the amount augmentation. The method, the proof of correctness, the pseudocode and a sample execution are all described in details here.


\subsubsection*{Helper graph.}
Recall that our real-valued input $\mathcal{I}$ consists of a stable allocation instance $(G, q, c,O)$ and a feasible allocation~$x$. First, we define a helper graph $H(x)$ on the same vertices as~$G$. This graph is dependent on the current allocation $x$ and will be changed whenever we modify~$x$. The edge set of $H(x)$ is partitioned into three disjoint subsets. The first subset $P$ is the set of Phase~I blocking edges. Each job $j$ that has at least one edge with positive $x$ value, also has a worst allocated edge,~$r(j)$. These are the edges jobs tend to reduce $x$ along when a myopic change is made. These \emph{refusal pointers} form~$R$, the second subset of~$E(H(x))$. We also keep track of edges that are currently not of blocking type~I, but later on they may enter set~$P$. This last subset $P'$ consists of edges that may become blocking of type~I after some myopic changes. An edge $jm \notin P$ has to fulfill three criteria in order to belong to~$P'$: \begin{inparaenum}[1)]
    \item $c(jm) > x(jm)$;
    \item $m$ has at least one refusal edge;
    \item $j$ prefers $jm$ to its worst allocated edge~$r(j)$.
  \end{inparaenum}
Such an edge immediately becomes blocking if $m$ loses allocation along one of its refusal edges. Edges in $P'$ are called \emph{possibly blocking edges}, the set $P \cup P'$ forms the set of \emph{proposal edges}. Note that a job $j$ may have several edges in $P$ and~$P'$, but at most one in~$R$. Moreover, if $j$ has a proposal edge in~$H(x)$, it also has an edge in~$R$. Regarding the machines, if $m$ has a $P'$-edge, it also has an $R$-edge. The following lemma provides an additional structural property of~$H(x)$.

\begin{lemma}
\label{p_good}
	If $jm \in P$ and $j'm \in P'$, then $\rank_m(jm) < \rank_m(j'm)$. \\That is,blocking edges are preferred to possibly blocking edges by their common machine~$m$.
\end{lemma}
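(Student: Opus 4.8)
The plan is to chase the definitions of $P$ and $P'$ and reduce the statement to a single comparison through the worst allocated edge $r(m)$ of the shared machine~$m$. The crucial preliminary observation I would establish first is that whenever $j'm \in P'$, the machine $m$ must be \emph{saturated}, i.e.\ $x(m) = q(m)$. Indeed, an edge $j'm \in P'$ is by definition unsaturated and satisfies $\rank_{j'}(j'm) < \rank_{j'}(r(j'))$, so it already meets the first two conditions of Definition~\ref{def_st_all} for being a blocking edge of type~I. If $m$ had free quota, the third (machine-side) condition of that definition would hold automatically, and $j'm$ would be a genuine type~I blocking edge, hence in~$P$ --- contradicting $j'm \notin P$. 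Therefore $x(m) = q(m)$, and in particular $m$ possesses a worst allocated edge~$r(m)$.

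Next I would extract the two rank inequalities at $m$ separately. Since $j'm \in P'$ is not blocking while $m$ is saturated, the machine-side blocking condition must fail for it; this forces $m$ not to prefer $j'm$ to its worst positive edge, that is, $\rank_m(j'm) \ge \rank_m(r(m))$. On the other hand, $jm \in P$ is an actual blocking edge, and because $m$ is saturated its blocking property at $m$ can only arise from the dominance clause, so $m$ prefers $jm$ to its worst positive edge, giving $\rank_m(jm) < \rank_m(r(m))$.

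Finally I would simply chain the two inequalities, $\rank_m(jm) < \rank_m(r(m)) \le \rank_m(j'm)$, which yields $\rank_m(jm) < \rank_m(j'm)$, as claimed.

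I do not anticipate a genuine obstacle, as the whole argument is definition unfolding. The one point that deserves care --- and which I would state explicitly --- is the observation that $P'$ membership already guarantees the job-side type~I condition, so that the only room left for $j'm$ to avoid being a true blocking edge lies at the saturated machine~$m$. That single fact is exactly what makes $r(m)$ a valid pivot separating the ranks of the two edges at their common machine.
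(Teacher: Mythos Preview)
Your argument is correct and follows essentially the same route as the paper's: both proofs exploit that $j'm$ already satisfies the unsaturation and job-side type~I conditions, so its exclusion from $P$ must come from failing the machine-side blocking condition, while $jm\in P$ necessarily satisfies it. The only cosmetic differences are that you make the saturation of $m$ explicit and pivot directly through $m$'s worst allocated edge, whereas the paper phrases the same comparison as a one-line contradiction (``if $\rank_m(j'm)<\rank_m(jm)$ then $j'm$ would dominate $x$ at $m$ too, hence lie in $P$'').
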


\begin{proof}
	Since $jm \in P$ is a blocking edge of type~I, $jm$ dominates $x$ at~$m$. If the statement is false, then $\rank_m(jm) > \rank_m(j'm)$ for some unsaturated edge $j'm$ that is better than the worst allocated edge of~$j'$. But then also $j'm$ dominates $x$ at~$m$. This, together with the first and last properties of possibly blocking edges would imply that~$j'm \in P$.
\end{proof}

Once again we return to our example shown on Figure~\ref{fig:ex}. The only blocking edge $j_3m_1$ alone forms~$P$. $R$ contains all four edges with positive allocation value: $j_1m_1, j_2m_1, j_3m_2$ and~$j_4m_3$. Edges $j_3m_3$ and $j_4m_2$ are possibly blocking. Thus, in this case, $G = H(x)$.


\subsubsection*{Alternating walks.}

Our algorithm performs augmentations along alternating walks, so that the allocation value of the refusal edges decreases, while the value of proposal edges increases. This is done in such a way that $R$, $P$ or $P'$ (and thus, $H(x)$) changes. The main idea behind these operations is the same we used in our previous proof: reassigning allocation to blocking edges from worse edges, such that the procedure is monotone. The difference between the two methods is that while our first algorithm tackles a single blocking edge in each step, here we deal with a set of blocking edges (forming the alternating walk) at once. 

When choosing the alternating proposal-refusal walk $W$ to augment along, the following rules have to be observed:
\begin{enumerate}
    \item The first edge $jm_1$ is a $P$-edge.
    \item $P$ and $P'$-edges are added to $W$ together with the refusal edge they are incident with on the active side.
    \item\label{bestp} Machines choose their best $P$ or $P'$-edge.
    \item\label{term} $W$ ends at $m$ if \begin{inparaenum}[1)]
	\item $m$ has no proposal edge or 
	\item its best proposal edge goes to a vertex already visited by~$W$.\end{inparaenum}
\end{enumerate}

As long as there is a blocking edge of type~I, the first edge $jm_1$ of such a walk can always be found. Lemma~\ref{p_good} guarantees that point~\ref{bestp} is not harmed by this~$jm_1$. After taking $r(j)$, all that remains is to continue on best proposal edges of machines and refusal edges of jobs they end at. Since $G$ is a finite set, either of the cases listed in point~\ref{term} will appear. According to these rules, proposal-refusal edge pairs are added to the current path until \begin{inparaenum}[1)]
	\item there is no pair to add or 
	\item the path reaches a vertex already visited. \end{inparaenum} In the first case, $W$ is a path. In the latter case, $W$ is a union of a path and a cycle, connecting at exactly one vertex. This vertex is the last vertex listed on~$W$, where our method halts, observing point~\ref{term}. $W$ can be, of course, a single path or a single cycle as well.
	
	
	Before elaborating on the amount of augmentation, we emphasize that $W$ is just a \emph{subset} of the set of edges whose $x$ value changes during an augmentation step. The goal is to reassign allocation from refusal edges to blocking edges, until a stable solution is derived. Naturally, on an alternating walk, refusal edges lose the same amount of allocation proposal edges gain. But, except if augmentations are performed along a single cycle, there is a single machine $m_1$ that gains allocation in total. In order to preserve feasibility, this machine might have to refuse allocation on edges not belonging to~$W$. The exact amount of these refusals is discussed later, together with the amount of augmentation along~$W$. Since no other vertex gains allocation in an augmentation step, feasibility cannot be harmed elsewhere. Thus, these are the only edges not on $W$ that are modified.
	
	 By contrast, if the augmentation is performed along a single cycle~$C$, refusals only happen on $r(j) \in W \cap R$ edges. Even if the machine $m_1$ that started $C$ has a full quota, it does not need to refuse any allocation, since $x(m_1)$ remains unchanged during the augmentation. Note that executing several local myopic steps greedily, like in our first algorithm, would lead to a different output. Then, $m_1$ would refuse edges, not knowing that it loses allocation later. As a result of that, $m_1$ would go under its quota, and would possibly create new blocking edges. Both strategies are better response, the difference is that our second algorithm keeps track of changes made as a consequence of a myopic operation.	

\subsubsection*{Amount of augmentation.}
Once $W$ is fixed, the amount of allocation $A$ has to be determined to augment with. It must be chosen so that
\begin{inparaenum}[1)]
	\item a feasible allocation is derived and 
	\item at least one refusal edge becomes empty or at least one proposal edge leaves~$P \cup P'$. \end{inparaenum} These points guarantee that $H(x)$ changes. To fulfill these two requirements, the minimum of the following terms is determined.
	\begin{enumerate}
		\item Allocation value on refusal edges along $W$: $x(r(j))$, where $r(j) \in W \cap R$.
		\item Residual capacity on proposal edges along $W$: $\bar{x}(p), \bar{x}(p')$, where $p, p' \in W \cap (P  \cup P')$.	
		\item If $W$ is not a single cycle, $m_1$ may refuse sufficient amount of allocation such that $jm_1$ does not become saturated, but it stops dominating $x$ at~$m_1$. In this case, the residual quota of $m_1$ must be filled up and, in addition, the sum of allocation value on edges worse than $jm_1$ must be refused. With this, $jm_1$ becomes the worst allocated edge of a full machine. Until reaching this point, $jm_1$ may gain $\bar{x}(m_1) + x(\text{edges dominated by }jm_1 \text{ at }m_1)$ amount of allocation in total.
	\end{enumerate}
To summarize this, we augment with $A := \min\{x(r(j)), \bar{x}(p), \bar{x}(p')|  r(j) \in W \cap R, p, p' \in W \cap (P  \cup P')\}$ if $W$ is a cycle, because then the last case with the starting vertex $m_1$ may not occur. Otherwise, the amount of augmentation is $A := \min \{x(r(j)), \bar{x}(p), \bar{x}(p'), \bar{x}(m_1) + x(\text{edges dominated by }jm_1 \text{ at }m_1) | r(j) \in W \cap R, p, p' \in W \cap (P  \cup P')\}$.


The second phase of our method can be interpreted as the execution of the first phase on a modified instance. The modification needed consist of introducing a dummy job and swapping the roles of the active and passive color classes.

In total, the algorithm performs $O(|V||E|)$ rounds, each of them needs $O(|V|)$ time to be computed. Thus, it runs in $O(|V|^2|E|)$ time. For a detailed proof of correctness and runtime computation, see the proof below.

\begin{theorem}
		For every real-valued allocation instance and given feasible allocation, there is a sequence of better responses leading to a stable allocation in $O(|V|^2|E|)$ time.
\end{theorem}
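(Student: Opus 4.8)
The plan is to establish correctness and the runtime bound separately, building directly on the helper-graph machinery and the two potential functions already introduced. First I would fix the accelerated Phase~I and prove that each augmentation along the alternating walk $W$ is a legitimate sequence of better response steps: every proposal edge that gains allocation is blocking (a $P$-edge) or becomes blocking the instant the incident machine sheds allocation (a $P'$-edge), and every refusal edge that loses allocation is a worst allocated edge $r(j)$. The feasibility argument is the key local check: since refusal edges on $W$ lose exactly what the following proposal edge gains, the only vertex whose total allocation rises is the starting machine $m_1$, and the third term in the definition of $A$ (together with the off-walk refusals of $m_1$) is precisely what keeps $x(m_1)\le q(m_1)$. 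I would then invoke Lemma~\ref{p_good} to argue that the ``best proposal edge'' rule in point~\ref{bestp} is consistent, so that $W$ is well defined and terminates (as a path, or a path joined to a cycle) because $G$ is finite.

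Next I would prove termination and the iteration count. By the choice of $A$, every augmentation either empties a refusal edge or pushes a proposal edge out of $P\cup P'$, so $H(x)$ genuinely changes in each round; this is where I would introduce a potential counting the structure of $H(x)$ together with the instability, and show it decreases in a way that bounds the number of Phase~I rounds by $O(|V||E|)$. The Phase~II analysis I would obtain for free via the stated reduction: adding a dummy job and swapping the active and passive classes turns every Phase~II step into a Phase~I operation on a modified instance, so the same bound applies. Combining the two phases, with each round costing $O(|V|)$ to identify $W$, compute $A$, and update $H(x)$ (a walk touches at most $O(|V|)$ edges), gives the claimed $O(|V|^2|E|)$ total running time. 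Crucially, none of these arguments uses rationality of the data: the progress is measured combinatorially by changes in $H(x)$, not by a fixed positive increment in a potential, which is exactly why the accelerated version works for irrational input where the original algorithm of Theorem~\ref{th:better_rat} fails.

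Finally I would verify that the output is stable: when no more $P$-edges exist at the end of Phase~I there is no blocking edge of type~I, and when Phase~II halts there is no blocking edge of type~II either; together with the invariant (proved as in Theorem~\ref{th:better_rat} and reused here) that Phase~II never reintroduces a type~I blocking edge, this shows the terminal allocation is blocked by no edge, hence stable.

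I expect the main obstacle to be the structural termination argument for Phase~I, specifically designing the potential on $H(x)$ that simultaneously captures the three edge sets $P$, $R$, $P'$ and proving it strictly decreases under an augmentation that may reshape the walk in complicated ways (an emptied refusal edge can delete a $P'$-edge, a newly saturated proposal edge can remove a $P$-edge, and the off-walk refusals of $m_1$ can create fresh $P'$-edges). Showing that these competing effects net out to genuine progress, and that they do so in $O(|V||E|)$ rounds rather than merely finitely often, is the delicate part; the feasibility bookkeeping around $m_1$ and the cycle-versus-path case distinction in the definition of $A$ is a secondary but error-prone point that I would treat carefully.
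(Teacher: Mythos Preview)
Your proposal is correct and follows essentially the same approach as the paper: the helper-graph machinery, the alternating-walk augmentation with the three-way minimum for $A$, the Phase~II reduction via a dummy job and role swap, and the $O(|V||E|)$ iteration bound times $O(|V|)$ per round are all exactly what the paper does. The only place where you are vaguer than the paper is the potential itself: the paper instantiates it as the pair of integer vectors $\Theta(x)=\bigl(\rank_j(r(j)),\,-\rank_m(\text{best proposal edge at }m)\bigr)$, shows each augmentation strictly decreases it lexicographically, verifies via a case analysis that updating $R$, $P$, $P'$ never increases it (using the monotonicity of $r(j)$ from Lemma~\ref{r_mon}), and then obtains the $O(|V||E|)$ bound by arguing that a decrement in $\Theta_1$ can raise at most one coordinate of $\Theta_2$ by at most $|J|-1$.
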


\begin{proof}

\textbf{Potential function. }
We show with the help of the following multicriteria potential function that the procedure is monotone and finite:

$$\Theta(x) := \left(\Theta_1(x),\Theta_2(x)\right) :=  \left(\rank_j(r(j)), - \rank_m(\text{best proposal edge at } m) \right).$$

The first component is a vector of $|J|$ entries. The $i$th of them is associated to the $i$th job of the instance and contains the rank of the worst allocated edge of this job. If there is no allocated edge, $\rank_j(r(j))$ can be interpreted as a large number, for example as~$|M|+1$. The second component is also a vector, containing one element for each machine. This element expresses the rank of the best proposal edge incident to~$m$. In lack of proposal edges, this term can also be interpreted as a large constant, for example as~$|J|+1$. In order to keep both terms decreasing, a minus sign is added to this expression.

Each round of our algorithm consists of two operations that change~$x$: procedure augment and update. First, we prove that the $\Theta(x)$ lexicographically strictly decreases after each augmentation. Both $\Theta_1$ and $\Theta_2$ are vectors: decrement means that at least one element of the vector decreases, while no element increases. For $\Theta_1$, such a decrement takes place if no job receives a worse allocated edge than its current edges, and, in addition, at least one job loses its worst allocated edge entirely. The second component $\Theta_2$ decreases if no machine receives a proposal edge better than all of its current proposal edges and at least one machine loses its best proposal edge. As a second step, we will see that the update operation never increases~$\Theta(x)$.

Since $\Theta(x)$ is a bounded, integer-valued function, any procedure that modifies it monotonically, is finite. Later, we elaborate on the running time of our algorithm.

\textbf{Augmentation.} As mentioned above, the our goal here is to show that each augmentation step decreases~$\Theta(x)$. The amount of allocation we augment with depends only on the extreme points of the $\min$ function. Recall the three points we listed when defining the amount of augmentation.
\begin{enumerate}
	\item $x(r(j))$ \\ The worst allocated edge of $j$ becomes empty, while $x(j)$ remains unchanged, hence $\Theta_1(x)$ decreases.
	\item $\bar{x}(p)$ or $\bar{x}(p')$ \\ If one of the proposal edges reaches its capacity, it stops being blocking. Since it was the best blocking edge of its machine, $\Theta_2(x)$ decreases.
	\item In case of walks: $\bar{x}(m_1) + x(\text{edges dominated by } jm_1 \text{ at } m_1)$ \\ The first blocking edge on $W$, $m_1$'s best proposal edge ceases to dominate $x$ at~$m_1$, hence $\Theta_2(x)$ decreases.
\end{enumerate}

Having shown that $\Theta(x)$ decreases every time when either of the augmenting procedures are called, all that remains to show is that the second operation, updating $H(x)$, does not affect this monotonicity negatively. The vertex set of $H(x)$ is fix. In the following, we study the three subsets of $E(H)$ separately and show that updating them never increases~$\Theta(x)$. 

\subsubsection*{Update $R$.}

\begin{lemma}
\label{r_mon}
	During the accelerated Phase~I algorithm, $r(j)$ moves monotonically on $j$'s preference list, always pointing to a better machine.
\end{lemma}

\begin{proof}
	Suppose that there is a refusal pointer that moved to a worse edge. Since $r(j)$ is always the worst allocated edge of~$j$, this implies that $j$ increased $x$ along an edge worse than any of its allocated edges. Allocation is increased only on proposal edges. Since no proposal edge may be worse than the current refusal pointer, it is impossible to increase allocation on such an edge during Phase~I.
\end{proof}

With this lemma we showed that any operation that shifts a refusal pointer develops our potential function~$\Theta(x)$. From this point on, we consider a setting where all refusal pointers are fix. This also implies that $\Theta_1$ does not change, thus, we only concentrate on~$\Theta_2$.

\subsubsection*{Update $P$.}

If $\Theta_2(x)$ increases, then there is a machine $m$ whose best proposal edge became better. Since the preference lists are fix, this is only possible, if an edge that was not in $P \cup P'$ becomes blocking or possibly blocking, moreover, it becomes the best proposal edge of the machine. If it becomes blocking (and not possibly blocking), then update $P$ adds an edge $jm \notin P \cup P'$ to~$P$. Blocking edges of type~I have to fulfill three criteria, at least one of them was not fulfilled before the augmentation.

\begin{enumerate}
	\item $jm$ became unsaturated
		\begin{itemize}
 			\item One of the two possibilities for an edge to loose allocation occurs when $jm \in W \cap R$. Since $jm$ is already the worst allocated edge of~$j$, it may not become a blocking edge of type~I.
			\item Even if $jm \notin W$, it can lose allocation, but only if $x(jm)$ was reduced by~$m = m_1$, the starting vertex of the alternating walk. It is the only machine that refuses allocation, and it does so only if $A - \bar{x}(m_1) > 0$. When the refusal happens, $x(m_1) = q(m_1)$ and $m_1$ has no worse allocated edge than~$jm_1$. In addition, $m_1$ does not lose any allocation in the current step. Since $q(m_1)$ is full with edges better than~$jm_1$, $jm_1$ is not blocking.
		\end{itemize}
	\item $jm$ became better than the worst allocated edge of $j$\\ Lemma~\ref{r_mon} shows that $j$'s worst allocated edge never becomes worse during Phase~I.
 	\item $jm$ became better than the worst allocated edge of $m$ or $m$ became unsaturated
		\begin{itemize}
			\item In the first case, $m$ increased $x$ along an edge worse than~$jm$. This worse edge was in~$P \cup P'$, hence $jm$ already dominated $x$ at $m$ or $m$ already had a refusal pointer. Thus, changing this property is not sufficient for $jm$ to become blocking.
			\item If $m$ lost some allocation, then it was the last vertex on $W$ and thus, it had a refusal pointer. According to out definition of~$P'$, all unsaturated edges that dominate $x$ at their job and are incident to a refusal pointer at their machine already belonged to~$P'$. They may leave this set now and be added to~$P$, but $P \cup P'$ remains unchanged and thus, the best proposal edge at $m$ as well.
		\end{itemize}
\end{enumerate}

\subsubsection*{Update $P'$.}

$\Theta_2(x)$ may also increase, if an edge $jm \notin P \cup P'$ is added to~$P'$ and it becomes the best proposal edge of~$m$. Just like above, we check the three criteria that have to be fulfilled by edges in~$P'$.

\begin{enumerate}
	\item $jm$ became unsaturated
		\begin{itemize}
 			\item First we consider the case when $jm \in W \cap R$. But then $jm$ is a refusal edge and worst allocated edges of jobs are never in~$P'$.
			\item The other option is that $x(jm)$ was reduced by~$m = m_1$. As mentioned above, the only machine that refuses allocation is~$m_1$, the first vertex on an alternating walk. Even if $jm$ becomes a $P'$-edge, $m_1$ had a better proposal edge at the beginning of the augmentation: the first edge of~$W$.
		\end{itemize}
	\item $jm$ became better than the worst allocated edge of $j$\\ We can again rely on Lemma~\ref{r_mon}.
 	\item $m$ gained a refusal edge \\ We supposed that set $R$ is fix, no refusal pointer moves.
\end{enumerate}

We have investigated the effects of the update operation on $\Theta(x)$ for all three subsets of~$E(H(x))$. Each round of the algorithm consists of the following three steps: finding an alternating walk, augmenting along it and then updating~$H(x)$. The first procedure does not change $\Theta(x)$, the second strictly decreases it, and the last one never increases it. Thus, $\Theta(x)$ changes strictly monotonically in each round. Since $\Theta(x)$ is integer-valued and bounded, our algorithm terminates.
\end{proof}

\subsubsection*{Running time.}

The helper graph $H(x)$ has at most as many edges and vertices as~$G$. In each iteration, $\Theta(x)$ develops. Consider first the case when only $\Theta_2$ changes. The best proposal edge of each machine $m$ can walk along all $\delta(m)$ edges of~$m$. Since the procedure is monotone, $|E|$ such steps can be executed in total. Then, $\Theta_1$ has to develop. Just like $\Theta_2$, $\Theta_1$'s monotone behavior also allows $m$ steps in total. Yet it is not possible that both components need all the $m$ rounds. When a refusal pointer $jm$ switches to a better edge~$jm'$, most of the elements in vector $\Theta_2$ remain unchanged.

Suppose the last augmentation along walk $W$ shifted a single refusal pointer~$jm$. We investigate the change in~$\Theta_2$. Clearly, $\Theta_2$ can be increased, since only lexicographical monotonicity of $\Theta(x)$ can be shown. There are at most three special machines on~$G$: $m$, $m'$ and the last machine on~$W$, if $W$ is not a single cycle. Machines not on $W$ remain unchanged. Other machines on $W$ reallocate some allocation to better jobs than before. Thus, they develop their lexicographical situation and keep their refusal edges. Machines with the same set of refusal edges do not gain new possibly proposal edges, while machines with a better lexicographical situation do not gain new blocking edges. Thus, all new edges in $P \cup P'$ must be incident to one of the three machines mentioned above. Due to the last operation along $W$ that shifted~$r(j)$, $m$ possibly ceases to have refusal edges at all, thus, it may loose its possibly blocking edges. Regarding the last machine on~$W$, it loses allocation, and through that it may receive at most $|J|-1$ new blocking edges. These edges were all possibly blocking before, moreover, they lead back to~$W$. Similarly, $m'$, gaining a refusal edge, may become the end vertex of new possibly blocking edges, but there are at most $|J|-1$ of them. To summarize this: when $\Theta_2$ develops in one element, $\Theta_1$ may be increased in at most one element by at most~$|J|-1 < |V|$.

This argumentation shows that the number of iterations can be bounded by $O(|V||E|)$ from above, because $\Theta(x)$ cannot have more different states during the execution of the algorithm. Next, we determine how much time is needed to execute a single augmentation. Procedure \textsc{FindWalk} starts with choosing any machine that has a blocking edge of type~I. This can be done in $O(|V|)$ time. Adding the best proposal edge and the refusal pointer takes constant time, if they are stored for each vertex. Since at most one vertex is visited twice by the walk, after $O(|V|)$ steps, $W$ is chosen. Then, either of the two augmenting procedures is called. It modifies $x$ on $O(|V|)$ edges. At last, $R, P,$ and $P'$ are updated. As explained above, the change in those sets involves at most $O(|V|)$ edges.

In total, the algorithm performs $O(|V||E|)$ rounds, each of them needs $O(|V|)$ time to be computed. In total, the accelerated Phase~I algorithm runs in $O(|V|^2|E|)$ time.

\subsection*{Accelerated second phase}

The second phase can be accelerated in a very similar manner to the first phase. Instead of describing this new algorithm directly and proving its correctness using the same methods as above, we choose a shortcut. The main idea in this subsection is that the accelerated second phase of our algorithm is actually the accelerated first phase of the same algorithm on a slightly modified instance. Thus, its correctness and running time have already been proved.

At the beginning of our argumentation we make these modifications on the instance $\mathcal{I}$ given at the termination of the accelerated Phase~I algorithm. We show that the set of blocking edges of type~I on the modified instance $\mathcal{I}'$ and the set of blocking edges of type~II on $\mathcal{I}$ coincide. Then we let our accelerated Phase~I algorithm run on~$\mathcal{I}'$. At the end, we argument that its output is stable on~$\mathcal{I}$.

\subsubsection*{Modified instance.}

After the termination of the first phase, an allocation $x_0$ is given so that all blocking edges are of type~II. This input of the second phase is modified the following way. A dummy job $j_d$ and edges between each machine and $j_d$ are added to~$G$. The capacity of these edges equals the maximum quota amongst all machines, $q(j_d)$ is their sum. While $j_d$'s preference list can be chosen arbitrarily, the new edges stand at the bottom of the preference lists of the machines. The new graph is called~$G'$. Not only the graph, but also the allocation $x_0$ is slightly modified: machines with not yet filled up quota assign all their free quota to~$j_d$. In this new allocation,~$x_0'$, all machines are saturated. The new instance $\mathcal{I}'$ consists of $G', q', c', O'$ and~$x_0'$.

As mentioned above, our goal is to perform Phase~I operations on~$\mathcal{I}'$. In order to be able to do so, we swap the two color classes: jobs play a passive role, while machines become the active players. Since each active vertex has a filled up quota, all blocking edges are of type~I on~$\mathcal{I}'$.

\begin{figure}[H]
\begin{center}
\begin{tikzpicture}[scale=1, transform shape]

\tikzstyle{vertex} = [circle, draw=black]
\tikzstyle{edgelabel} = [circle, fill=white]

\node[vertex] (j_3) at (0, 3) {$j_3$};
\node[above=0.2 cm of j_3] {$1.9$};
\node[vertex] (j_4) at (4, 3) {$j_4$};
\node[above=0.2 cm of j_4] {$1$};
\node[vertex] (j_2) at (-4, 3) {$j_2$};
\node[above=0.2 cm of j_2] {$1$};
\node[vertex] (j_1) at (-8, 3) {$j_1$};
\node[above=0.2 cm of j_1] {$1$};

\node[vertex] (m_1) at (-6,0) {$m_1$};
\node[below=0.2 cm of m_1] {$2.8$};
\node[vertex] (m_2) at (-2, 0) {$m_2$};
\node[below=0.2 cm of m_2] {$1$};
\node[vertex] (m_3) at (2,0) {$m_3$};
\node[below=0.2 cm of m_3] {$1$};

\draw [] (j_3) -- node[edgelabel, near start] {2} node[edgelabel, near end] {2} (m_1);
\draw [ultra thick] (j_3) -- node[edgelabel, near start] {1} node[edgelabel, near end] {2} (m_3);
\draw [] (j_3) -- node[edgelabel, near start] {3} node[edgelabel, near end] {1} (m_2);
\draw [ultra thick] (j_4) -- node[edgelabel, near start] {1} node[edgelabel, near end] {2} (m_2);
\draw [] (j_4) -- node[edgelabel, near start] {2} node[edgelabel, near end] {1} (m_3);

\draw [ultra thick] (j_2) -- node[edgelabel, near end] {1} (m_1);
\draw [ thick] (j_1) -- node[auto=right] {$0.8$} node[edgelabel, near end] {3} (m_1);
	
\end{tikzpicture}
\end{center}
\caption{$x_0$ on $\mathcal{I}$}
\end{figure}
\begin{figure}[H]
\begin{center}
\begin{tikzpicture}[scale=0.8, transform shape]

\tikzstyle{vertex} = [circle, draw=black]
\tikzstyle{edgelabel} = [circle, fill=white]

\node[vertex] (j_3) at (0, 3) {$j_3$};
\node[above=0.2 cm of j_3] {$1.9$};
\node[vertex] (j_4) at (4, 3) {$j_4$};
\node[above=0.2 cm of j_4] {$1$};
\node[vertex] (j_2) at (-4, 3) {$j_2$};
\node[above=0.2 cm of j_2] {$1$};
\node[vertex] (j_1) at (-8, 3) {$j_1$};
\node[above=0.2 cm of j_1] {$1$};
\node[vertex] (j_d) at (8, 3) {$j_d$};
\node[above=0.2 cm of j_d] {$4.8$};

\node[vertex] (m_1) at (-6,0) {$m_1$};
\node[below=0.2 cm of m_1] {$2.8$};
\node[vertex] (m_2) at (-2, 0) {$m_2$};
\node[below=0.2 cm of m_2] {$1$};
\node[vertex] (m_3) at (2,0) {$m_3$};
\node[below=0.2 cm of m_3] {$1$};

\draw [gray, ultra thick] (j_d) -- node[edgelabel, near end] {4} (m_1);
\draw [gray] (j_d) -- node[edgelabel, near end] {3} (m_2);		
\draw [gray] (j_d) -- node[edgelabel, near end] {3} (m_3);

\draw [] (j_3) -- node[edgelabel, near start] {2} node[edgelabel, near end] {2} (m_1);
\draw [ultra thick] (j_3) -- node[edgelabel, near start] {1} node[edgelabel, near end] {2} (m_3);
\draw [] (j_3) -- node[edgelabel, near start] {3} node[edgelabel, near end] {1} (m_2);
\draw [ultra thick] (j_4) -- node[edgelabel, near start] {1} node[edgelabel, near end] {2} (m_2);
\draw [] (j_4) -- node[edgelabel, near start] {2} node[edgelabel, near end] {1} (m_3);

\draw [ultra thick] (j_2) -- node[edgelabel, near end] {1} (m_1);
\draw [ thick] (j_1) -- node[auto=right] {$0.8$} node[edgelabel, near end] {3} (m_1);
	
\end{tikzpicture}
\end{center}
\caption{$x_0'$ on $\mathcal{I'}$}
\end{figure}

Note that $\mathcal{I}'$ was constructed in such a way that - regardless of the type of blocking- each edge blocking $x$ also blocks $x'$ and vice versa. This is due to the fact that the only difference between the two instances is that machines' free quota appears as allocation on their worst edge on~$\mathcal{I}'$. The definition of a blocking edge does not distinguish between those two notions. In particular, given a specific allocation $x_0$ with no blocking edge of type~I, the set of Phase~II blocking edges on $\mathcal{I}$ and the set of Phase~I blocking edges on $\mathcal{I}'$ trivially coincide.

Let us denote the output of the accelerated Phase~I algorithm on $\mathcal{I}'$ by~$x'$, its restriction on $E(G)$ by~$x$.

\begin{claim}
	Allocation $x$ is stable on~$\mathcal{I}$.
\end{claim}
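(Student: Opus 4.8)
The plan is to prove the claim in two movements: first, that the output $x'$ of the accelerated Phase~I algorithm run on $\mathcal{I}'$ (with the machines as the active side) is in fact \emph{stable} on $\mathcal{I}'$, and second, that this stability descends to the restriction $x$ on $\mathcal{I}$ through the blocking-edge correspondence already recorded above. The second movement is essentially bookkeeping; the real work is in the first.

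For the first movement I would maintain the invariant that throughout the run on $\mathcal{I}'$ every machine stays saturated, so that \emph{every} blocking edge on $\mathcal{I}'$ is of type~I. This holds for $x_0'$ by construction. To see that one augmentation preserves it, recall that on $\mathcal{I}'$ the machines are the active players: along an alternating walk each active machine loses on its refusal edge (its worst allocated edge, namely the dummy edge $j_d m$ whenever it carries the free quota) exactly what it gains on its proposal edge, so its total allocation is unchanged. The only vertices whose totals change are the two passive jobs at the ends of the walk, and the refusal the starting job may have to perform is vacuous: the first edge of the walk is a machine-side type~I blocking edge, which corresponds to a type~II blocking edge on $\mathcal{I}$, so the starting job has no allocated edge worse than this first edge and therefore refuses nothing positive. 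Hence no machine is ever pushed below its quota, all machines remain saturated, and every blocking edge on $\mathcal{I}'$ is of type~I. Since the accelerated Phase~I algorithm terminates with no type~I blocking edge by the correctness established above, $x'$ has no blocking edge at all on $\mathcal{I}'$; in particular the edges $j_d m$, sitting at the bottom of every machine's list, are never blocking.

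For the second movement I would first note that $x$ is feasible on $\mathcal{I}$, since every machine satisfies $x(m) = q(m) - x'(j_d m) \le q(m)$ while job quotas and real-edge capacities are untouched. By construction $x'$ is exactly the allocation obtained from $x$ by placing each machine's free quota $\bar{x}(m)$ on its dummy edge, so $x$ and $x'$ are related precisely by the transformation for which it was already observed that a real edge blocks $x$ on $\mathcal{I}$ if and only if it blocks $x'$ on $\mathcal{I}'$. Concretely, for a real edge $jm$ the capacity condition and the condition at the job $j$ are literally identical on the two instances, while the condition at the (saturated) machine $m$ on $\mathcal{I}'$ reads ``$m$ prefers $jm$ to its worst positive edge,'' and, because the dummy edge is the worst, this is equivalent to ``$x(m) < q(m)$ or $m$ prefers $jm$ to its worst positive real edge,'' which is precisely the machine condition of Definition~\ref{def_st_all} on $\mathcal{I}$. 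As $x'$ is stable on $\mathcal{I}'$, no real edge blocks $x'$, hence no edge blocks $x$, and $x$ is stable on $\mathcal{I}$.

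The main obstacle is the saturation invariant in the first movement: one must argue carefully that the refusal performed by the starting vertex of an augmenting walk never strips a machine below its quota. This is exactly where the construction of $\mathcal{I}'$ is used, and it leans on the fact, carried over from the analysis of the second phase, that no type~I blocking edge is ever created on $\mathcal{I}$, so that the passive job meeting the first, machine-side type~I blocking edge of a walk has no worse allocated edge to surrender. Once this invariant is in place, the remaining steps are the routine matching of the two blocking definitions across the dummy construction.
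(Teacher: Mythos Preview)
Your proposal is essentially the paper's proof: both establish the invariant that every machine remains saturated throughout the run on $\mathcal{I}'$, conclude that no type~II blocking edge can exist there, and then use the already-recorded correspondence of blocking edges between $\mathcal{I}$ and $\mathcal{I}'$ to transfer stability of $x'$ to $x$. The second movement in your write-up is exactly the bookkeeping the paper leaves implicit.

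There is one small gap in how you justify the inductive step of the saturation invariant. You write that the starting job refuses nothing because the first edge ``corresponds to a type~II blocking edge on $\mathcal{I}$'', appealing to the earlier Phase~II analysis that no type~I blocking edge is ever created. But that analysis was carried out for the \emph{non-accelerated} Phase~II (single-edge improvements), whereas here we run the \emph{accelerated} Phase~I on $\mathcal{I}'$, which augments along whole walks; the moves are different and the earlier argument does not transfer verbatim. The paper closes this gap directly on $\mathcal{I}'$: it maintains the invariant ``every unsaturated edge that dominates the current allocation at its (saturated) machine is not better than its job's worst allocated edge'', and shows it is preserved precisely because rule~\ref{bestp} of the walk construction forces each passive job to increase allocation only on its \emph{best} proposal edge. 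That single observation replaces your appeal to the earlier Phase~II analysis and makes the induction self-contained.
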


\begin{proof}
	Suppose edge $jm$ blocks~$x$. On~$\mathcal{I}'$, $jm$ is unsaturated and dominates $x'$ at both end vertices, hence $jm$ blocks $x'$ as well. Since $x'$ is the output of the accelerated Phase~I algorithm on~$\mathcal{I}'$, $jm$ is of type~II. Our goal is to show by induction that $x'(m) = q(m)$ for all machines. Thus, a contradiction is derived, because on~$\mathcal{I}'$, no Phase~II blocking edge may occur.

	At start, $x'(m) = q(m)$ for all machines. The key property of $x_0'$ is that all unsaturated edges that dominate $x_0'$ at their (saturated) machine are not better than their job's worst edge in~$x_0'$. Otherwise, they would be blocking edges of type~I to~$x_0$. Adding a blocking edge $jm$ to $x_0'$ can therefore never result in a refusal by the passive vertex~$j$. Thus, after the first round, $x'(m) = q(m)$ still holds. Alternating walks are chosen in such a manner that jobs increase $x'$ only on their best proposal edges. This guarantees that even after the first round, if $jm$ dominates the current allocation $x_1'$ at $m$, it is not better than $j$'s worst edge in~$x_1'$. From this on, induction applies.
\end{proof}

The runtime of this phase may not exceed the runtime of the accelerated Phase~I algorithm, since the size of $\mathcal{I}'$ does not exceed the size of $\mathcal{I}$ significantly.

\begin{algorithm}[H]
\renewcommand{\thealgorithm}{}
\label{alg:acc:phI}
\caption{Accelerated Phase~I}
\begin{algorithmic}
	\While{$|P| > 0$}
		\State \Call{FindWalk}{H(x)}
		\If{$W$ is a cycle}
			\State \Call{AugmentCycle}{W}
		\Else
			\State \Call{AugmentWalk}{W}
		\EndIf		
	\State update $R$, $P$, $P'$
	\EndWhile
\end{algorithmic}
\end{algorithm}

\begin{algorithmic}[]
	\Procedure{FindWalk}{H(x)}
	\label{FindWalk}
		\State $i := 1$, $W := \emptyset$, find any $m_1 \in M$ with a $P$-edge
		\While{$m_i$ has a proposal edge and $m_i$'s best proposal edge $j_i m_i \cap W = \emptyset$}
			\State $W := W \cup j_i m_i \cup r(j_i)$
			\State $j_i m_{i+1} := r(j_i), i := i+1$
		\EndWhile
	\EndProcedure
\end{algorithmic}
\vline

\begin{algorithmic}[]
	\Procedure{AugmentCycle}{$W$}
		\State $A := \min\{x(r(j)), \bar{x}(p), \bar{x}(p')| r(j) \in W \cap R, p, p' \in W \cap (P  \cup P')\}$
		\For{$p, p' \in W \cap (P  \cup P')$}
			\State $x(p) := x(p) + A, x(p') := x(p') + A$
		\EndFor
		\For{$r(j) \in W \cap R$}
			\State $x(r(j)) := x(r(j)) - A$
		\EndFor
	\EndProcedure
\end{algorithmic}
\vline

\begin{algorithmic}[]
	\Procedure{AugmentWalk}{$W$}
		\State $A := \min \{x(r(j)), \bar{x}(p), \bar{x}(p'), \bar{x}(m_1) + x(\text{edges dominated by }j_1 m_1 \text{ at }m_1) | r(j) \in W \cap R, p, p' \in W \cap (P  \cup P')\}$
		\If{$A - \bar{x}(m_1)$ > 0}
			\State $m_1$ refuses $A - \bar{x}(m_1)$ allocation from its worst edges
		\EndIf
		\For{$p, p' \in W \cap (P  \cup P')$}
			\State $x(p) := x(p) + A, x(p') := x(p') + A$
		\EndFor
		\For{$r(j) \in W \cap R$}
			\State $x(r(j)) := x(r(j)) - A$
		\EndFor		
	\EndProcedure
\end{algorithmic}

Recall the example instance again. Checking both proposal and both refusal edges on $W = (m_1 j_3, j_3 m_2, m_2j_4, j_4m_3)$, the residual capacity of $m_1$, and the allocation on $m_1$'s worse edges, it turns out that~$A = 1$. Thus, allocation $x$ shown on the figure above is obtained after the first augmentation. While $j_3m_1$ leaves~$P$, $j_3m_3$ enters it. The set of refusal edges consists of all edges with positive allocation value. $P'$ is empty. In the second round, $W$ is easy to find: it is~$(m_3j_3, j_3m_1)$. After reassigning allocation of value~1 to~$j_3m_3$, Phase~I ends. The allocation derived is not yet stable: $j_1m_1$ and $j_3m_2$ block it, but they are both of type~II. 

\begin{figure}[H]
\begin{center}
\begin{tikzpicture}[scale=1, transform shape]

\tikzstyle{vertex} = [circle, draw=black]
\tikzstyle{edgelabel} = [circle, fill=white]

\node[vertex] (j_3) at (0, 3) {$j_3$};
\node[above=0.2 cm of j_3] {$1.9$};
\node[vertex] (j_4) at (4, 3) {$j_4$};
\node[above=0.2 cm of j_4] {$1$};
\node[vertex] (j_2) at (-4, 3) {$j_2$};
\node[above=0.2 cm of j_2] {$1$};
\node[vertex] (j_1) at (-8, 3) {$j_1$};
\node[above=0.2 cm of j_1] {$1$};

\node[vertex] (m_1) at (-6,0) {$m_1$};
\node[below=0.2 cm of m_1] {$2.8$};
\node[vertex] (m_2) at (-2, 0) {$m_2$};
\node[below=0.2 cm of m_2] {$1$};
\node[vertex] (m_3) at (2,0) {$m_3$};
\node[below=0.2 cm of m_3] {$1$};

\draw [ultra thick] (j_3) -- node[edgelabel, near start] {2} node[edgelabel, near end] {2} (m_1);
\draw [] (j_3) -- node[edgelabel, near start] {1} node[edgelabel, near end] {2} (m_3);
\draw [] (j_3) -- node[edgelabel, near start] {3} node[edgelabel, near end] {1} (m_2);
\draw [ultra thick] (j_4) -- node[edgelabel, near start] {1} node[edgelabel, near end] {2} (m_2);
\draw [] (j_4) -- node[edgelabel, near start] {2} node[edgelabel, near end] {1} (m_3);

\draw [ultra thick] (j_2) -- node[edgelabel, near end] {1} (m_1);
\draw [ thick] (j_1) -- node[auto=right] {$0.8$} node[edgelabel, near end] {3} (m_1);
	
\end{tikzpicture}
\end{center}
\caption{After the first round of the accelerated Phase~I algorithm}
\end{figure}
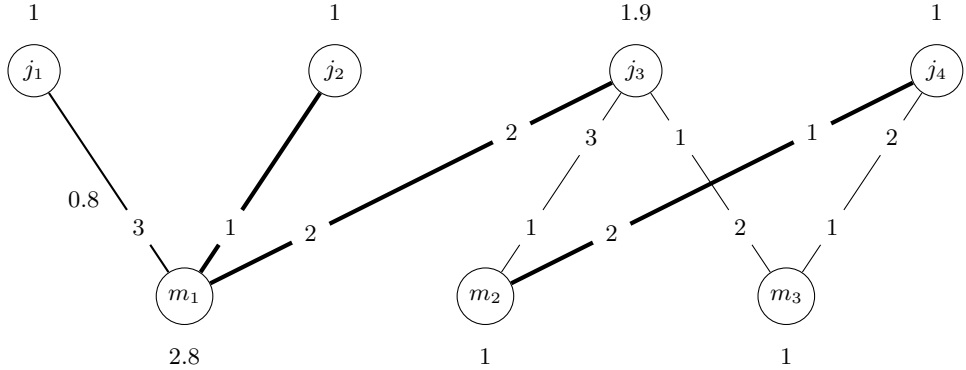

Our method resembles the well-known notion of \emph{rotations}~\cite{irvingbook}. They can be used when deriving a stable solution from another, by finding an alternating cycle of matching and non-matching edges and augmenting along them. In our algorithm, when we are searching for augmenting cycles or walks, we use an approach similar to rotations: jobs candidate their edges better than their worst positive edge, while machines choose the best out of them. However, two differences can be spotted right away. While rotations are always assigned to a stable solution different from the job-optimal, our method works on unstable input. Moreover, besides cycles we also augment along paths and walks.

\section*{Conclusion and open questions}

We solved the problem of uncoordinated processes on stable allocation instances algorithmically. Our first method is a deterministic better response algorithm that finds a stable solution through executing myopic steps. In case of rational input data, the existence of such an algorithm guarantees that random better response strategies terminate with a stable solution with probability one. Analogous results are shown for best response dynamics. We also prove that random best response strategies terminate in expected polynomial time on correlated markets, even in the presence of irrational data. An accelerated version of our first algorithm is provided as well. For any real-valued instance, it terminates after $O(|V|^2|E|)$ steps with a stable allocation. We also show a counterexample for a possible acceleration for the case of best response dynamics.

Future research may involve more complex stability problems from the paths-to-stability point of view. For example, any of the problems listed after the figure in Section~\ref{subsec:uncoord} can be combined with our general setting of allocations.

\subsection*{Acknowledgements}

We would like to thank P\'eter Bir\'o and Tam\'as Fleiner for their valuable comments. This work was supported by the Deutsche Telekom Stiftung, the Berlin Mathematical School (BMS), and the Deutsche Forschungsgemeinschaft within the research training group `Methods for Discrete Structures' (GRK 1408).

\bibliographystyle{abbrv}
\bibliographystyle{unsrt}
\bibliography{mybib}

\end{document}